\newtheorem{theorem}{Theorem}
\lstdefinelanguage{EVM}{
  morekeywords={PUSH, ADD, CALLDATASIZE, LT, JUMP, CALLDATALOAD},
  keywordstyle=\color{blue}\bfseries, 
  sensitive=false
}
\lstdefinelanguage{RTL}{
  morekeywords={V1, V2, V3, V4, V5},
  keywordstyle=\color{blue}\bfseries, 
  sensitive=false
}
\tiny\color{gray},
\begin{document}
%
\title{SEASONED: Semantic-Enhanced Self-Counterfactual Explainable Detection of Adversarial Exploiter Contracts}

\author{}
	

%


\IEEEoverridecommandlockouts
\makeatletter\def\@IEEEpubidpullup{6.5\baselineskip}\makeatother
\IEEEpubid{\parbox{\columnwidth}{
		Network and Distributed System Security (NDSS) Symposium 2025\\
		24-28 February 2025, San Diego, CA, USA\\
		ISBN 979-8-9894372-8-3\\
		https://dx.doi.org/10.14722/ndss.2025.[23$|$24]xxxx\\
		www.ndss-symposium.org
}
\hspace{\columnsep}\makebox[\columnwidth]{}}

\maketitle

\begin{abstract}
Decentralized Finance (DeFi) attacks have resulted in significant losses, often orchestrated through Adversarial Exploiter Contracts (AECs) that exploit vulnerabilities in victim smart contracts. To proactively identify such threats, this paper targets the explainable detection of AECs.

Existing detection methods struggle to capture semantic dependencies and lack interpretability, limiting their effectiveness and leaving critical knowledge gaps in AEC analysis. To address these challenges, we introduce SEASONED, an effective, self-explanatory, and robust framework for AEC detection.

SEASONED extracts semantic information from contract bytecode to construct a semantic relation graph (SRG), and employs a self-counterfactual explainable detector (SCFED) to classify SRGs and generate explanations that highlight the core attack logic. SCFED further enhances robustness, generalizability, and data efficiency by extracting representative information from these explanations. Both theoretical analysis and experimental results demonstrate the effectiveness of SEASONED, which showcases outstanding detection performance, robustness, generalizability, and data efficiency learning ability. To support further research, we also release a new dataset of 359 AECs. \footnote{
  The data, code and explanations results are archived on:  \url{https://doi.org/10.5281/zenodo.16753745}.
}
\end{abstract}


%
\IEEEpeerreviewmaketitle

\section{Introduction}
Decentralized Finance (DeFi), powered by blockchain ecosystem (e.g., Ethereum), has revolutionized financial systems through trustless transactions and programmable assets. Yet this innovation is persistently undermined by security attacks, ranging from oracle manipulations to flash loan exploits, that steal funds and destabilize protocols. Among these threats, \textbf{Adversarial Exploiter Contracts (AECs)} \footnote{They are also called Adversarial Contract \cite{ren2024lookahead} or Attacker Contract \cite{yang2024uncover} in the literature.}  -- malicious smart contracts \textit{deliberately designed and deployed by attackers} to exploit vulnerabilities in \textit{other smart contracts} --  stand as a critical attack vector.  In 2024 alone, AECs drove \$2.15 billion of the \$3 billion total losses in DeFi attacks \cite{Vismaya2024}, cementing their role as one of the dominant attack mechanisms eroding user trust and ecosystem sustainability.

Existing vulnerability detection tools \cite{bose2022sailfish,choi2021smartian,stephens2021smartpulse} focus on identifying \textit{internal flaws within a smart contract}, which share a critical limitation: they cannot prevent attacks due to blockchain's immutable nature. Once deployed, contracts cannot be modified, leaving identified weaknesses permanently exposed to exploitation. This \textit{fundamental constraint} necessitates \textbf{Adversarial Exploiter Contract Detection (AECD)} \cite{forta2023, yang2024uncover, ren2024lookahead}, which targets not passive vulnerabilities but active attack instruments—attacker-deployed contracts designed to exploit vulnerabilities. By identifying AECs before execution, AECD enables proactive countermeasures where patching is impossible. AECD thus establishes \textit{a vital paradigm shift}: from protecting victims through vulnerabilities detection to disarming attackers by weaponized AECs identification, thereby safeguarding the immutable ecosystems.

However, AECD presents significant challenges because the source code of AECs is typically unavailable. This absence obstructs direct analysis of the attack logic and impedes the extraction of meaningful malicious code patterns, thereby limiting the development of reliable heuristics for distinguishing AECs from benign contracts. Some approaches attempt detection by extracting statistical features from compiled bytecode of smart contracts, which is publicly accessible and generated when smart contracts are deployed on the blockchain system. Forta \cite{forta2023} applies TF-IDF and logistic regression to opcode sequences, while LookAhead \cite{ren2024lookahead} and Skyeye \cite{wang2024skyeye} leverage contract characteristics like function counts and call patterns. However, these methods exhibit critical limitations:

\begin{itemize}
    \item \textbf{Lack of Semantic Features}: They rely on syntactic features of bytecode, such as opcode distribution and the number of internal function calls. However, they overlook potential interactions and relational dependencies between different modules, which convey semantic information about the code logic. These overlooked features are often crucial for identifying AECs.
    \item \textbf{Lack of Interpretability}: They lack the capability to pinpoint the code fragments within AECs that is directly related to adversarial behaviors, making it difficult to comprehend the operational logic of AECs and to provide actionable insights for defense.
    \item \textbf{Statistically-driven}: These methods rely on statistical features that require extensive labeled data, which is scarce and expensive in practice, or lack generalizability, potentially missing newly emerging AECs with statistical features different from those in the training data.
\end{itemize}

To address these limitations, we propose SEASONED, an \textit{end-to-end} detector for AECD, that directly takes bytecode of smart contracts and outputs detection results \textit{as well as} explanations. Specifically, SEASONED first constructs a \textbf{Semantic Relation Graph (SRG)} from contract bytecode, capturing critical \textit{control, data, and effect} relations among code instructions. This graph-based representation facilitates an in-depth structural analysis of potential AECs without necessitating access to the source code. Subsequently, the SRG is analyzed by our proposed  \textbf{Self-Counterfactual Explainable Detection (SCFED)} module, which \textit{concurrently} provides: (1) a binary classification result (benign or AEC), and (2) counterfactual explanations that pinpoint the contract components most indicative of attack activity. Significantly, SEASONED operates exclusively on compiled bytecode and can be used pre-deployment, allowing for early identification and explanation of threats before attackers can exploit vulnerabilities on-chain, thereby helping to prevent potential asset losses.

Unlike existing detectors that solely focus on detection accuracy, SEASONED achieves dual objectives: 
\begin{itemize}
    \item \textbf{Obj.1 Interpretability Enhancement:} SEASONED integrates accurate detection with counterfactual explanations. These explanations comprise two key subgraphs derived from the Semantic Relation Graph (SRG): a \textit{factual subgraph} highlighting attack-relevant components, and a \textit{counterfactual subgraph} isolating benign components. That is, SEASONED not only reveals the core attack parts of AECs, but also eliminates redundant and noisy elements. This capability enables domain experts to analyze core attack logic, thereby facilitating a deeper understanding of adversarial mechanisms and advancing beyond the limitations of black-box detection models.
    \item \textbf{Obj.2 Multidimensional Performance Enhancement:}
    In practical applications, AEC detectors must not only demonstrate high accuracy, but also achieve concurrent advancements across three key dimensions:
    i) \textbf{Generalizability} - to adapt to novel, unforeseen AECs emerging beyond the scope of training data in this rapidly evolving ecosystem;  ii) \textbf{Robustness} - to withstand sophisticated adversarial evasion techniques like obfuscation and perturbations deployed by attackers; and
    iii) \textbf{Data Efficiency} - to learn effectively despite scarce, expensive expert annotations required for contract labeling. SEASONED delivers this multidimensional improvement intrinsically. Its core mechanism—prioritizing attack-relevant patterns from the factual subgraph—directly fortifies the model against novel threats, adversarial attacks, and data scarcity.
\end{itemize}

Crucially, while post-hoc explainers \cite{gnnexplainer,pgexplainer,cf2} can achieve interpretability (Obj.1) through after-the-fact analysis, they inherently cannot improve detector performance (Obj.2) because their explanations are decoupled from the training process. In contrast, our SCFED integrates explanation generation and detector enhancement in a unified framework. Thus, SCFED can simultaneously produce counterfactual explanations and strengthen the detector's capabilities in robustness, generalizability, and data-efficiency learning.

To evaluate the performance of SEASONED, we construct a real-world dataset from five public sources~\cite{defillama,rekt,slowmist,neptune,chainsec}. The comprehensive experimental results  show that our method not only accurately identifies AECs but also: (1) offers interpretable counterfactual explanations that elucidate the underlying logic of AECs; (2) maintains strong robustness against various perturbations; (3) demonstrates superior generalizability to newly emerging AECs compared to baseline methods; and (4) achieves highly efficient learning, delivering comparable performance with just 50 training samples as models trained on the entire dataset of over 1,000 samples.

Our key contributions can be summarized as follows.
\begin{enumerate}
    \item \textbf{Novel Semantic Representation:} We introduce a new representation for smart contract bytecode, the Semantic Relation Graph (SRG), which captures critical semantic relational information.
    \item \textbf{Effective Self-Explainable Detector:} We propose an effective self-explainable detector that delivers robust, generalizable, and data-efficient learning, providing both detections and explanations simultaneously. To our knowledge, this is the first self-explainable detector designed for adversarial exploiter contracts detection.
\end{enumerate}

This paper is organized as follows: Section II introduces DeFi ecosystems and Adversarial Exploit Contracts. Section III formally defines the problem of interpretable AEC detection. Section IV presents our solution—a self-explainable detection system combining semantic relation graphs representation and a self-counterfactual explainable detector. Section V validates the framework's detection performance and explanation quality across a real-world dataset. Finally, Section VI concludes with broader impacts and future directions.

\section{Background}

\subsection{DeFi and Smart Contract}

Decentralized Finance (DeFi) is a blockchain-powered financial system that leverages smart contracts to provide open, transparent, and accessible financial services without the need for traditional intermediaries such as banks or brokers.
By utilizing self-executing smart contracts, DeFi platforms automate complex financial operations such as lending, borrowing, trading, and asset management. 

The deployment of smart contracts constitutes a critical process in blockchain ecosystems. A smart contract encapsulates predefined behavioral logic through its source code, which is compiled into bytecode.
Subsequently, a contract creation transaction containing this bytecode is submitted to the blockchain. The transaction then enters the pending transaction pool, a repository for unverified incoming transactions, wherein the contract bytecode becomes publicly accessible. The final deployment is achieved upon the transaction's successful incorporation into a new block. Critically, our method enables the identification of AECs pre-deployment, thereby facilitating effective countermeasures to enhance smart contract security.

\begin{figure}[t]
    \centering
    \includegraphics[width=\linewidth]{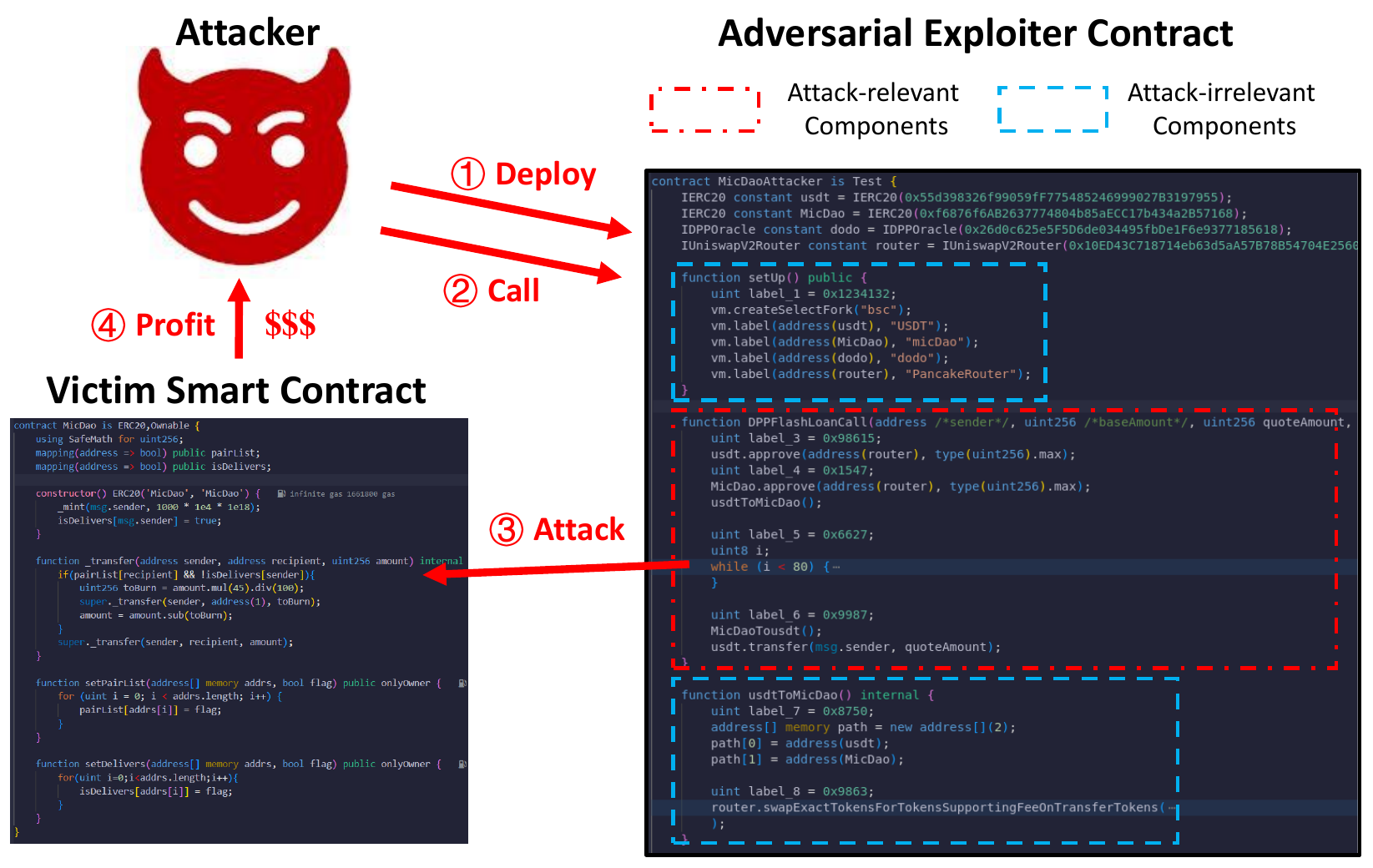}
    \caption{An attack incident involving AECs. The attacker first deploys an AEC and then sends a message to invoke it, which subsequently exploits and attacks vulnerabilities within the victim contract.}
    \label{fig:pattern}
\end{figure}

\subsection{Adversarial Exploiter Contracts (AECs)}

While smart contracts enable powerful and transparent financial services in DeFi, vulnerabilities in their implementation introduce significant security risks.
Attackers exploit these risks by deploying \textbf{Adversarial Exploiter Contracts (AECs)} – malicious contracts specifically designed to target victim contracts. These attacks typically involve AECs repeatedly invoking victim contracts to trigger malicious reentrancy or circumventing access controls to manipulate critical parameters like asset prices. A representative example of an AEC attack flow is illustrated in  Fig.~\ref{fig:pattern}. The attacker first deploys an AEC and subsequently invokes its entry-point function to exploit vulnerabilities in victim contracts for financial gain.

The severe threat posed by AECs demands urgent attention. A recent study \cite{OWASP2025} documented 149 attack incidents involving AECs, inflicting staggering losses totaling \$1.42 billion in 2024 alone. Crucially, the closed-source nature of AECs actively conceals their attack logic and operational mechanisms, creating fundamental and persistent gaps in current detection capabilities. This combination of immense financial impact and inherent obfuscation underscores the critical necessity for robust, effective methods to detect AECs proactively.


\begin{figure*}[t]
    \centering
    \includegraphics[scale=0.55]{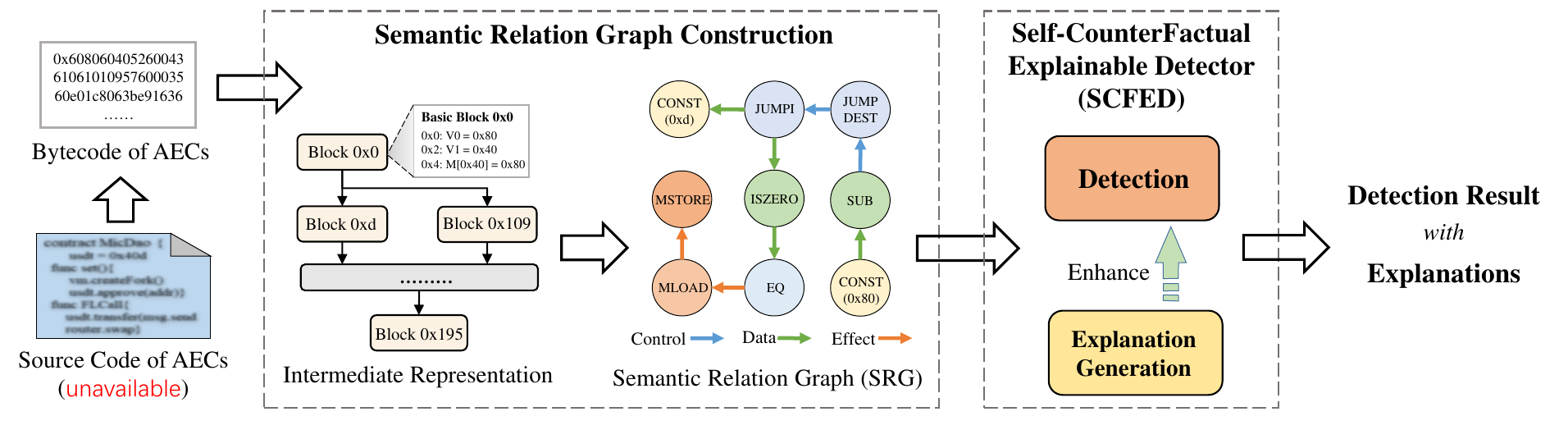}
    \caption{SEASONED consists of two main modules: Semantic Relation Graph Construction and the Self-Counterfactual Explainable Detector (SCFED). The first module builds a Semantic Relation Graph (SRG) from the bytecode of AECs, while SCFED delivers detection results (AEC or benign) along with explanations that highlight the core attack logic. }
    \label{fig:overview}
\end{figure*}

\section{Problem Statement}

\subsection{Threat Model}

\subsubsection{DeFi Attacks based on AECs}

We focus on DeFi attack incidents carried out by AECs, such as flash loan attacks and reentrancy attacks. A key characteristic of these attacks is that attackers do not interact directly with the victim; instead, they initiate a pre-deployed AEC to exploit the victim contract's vulnerabilities, making the attack more covert. Recent studies \cite{ren2024lookahead} indicate that attackers can initiate AECs through private transactions, thereby evading the detection of transaction-based methods \cite{zhang2020txspector, ferreira2021eye, xi2024pomabuster} and rendering such methods largely ineffective. It is worth noting that DeFi attacks not relying on AECs are beyond the scope of this paper.

\subsubsection{Attackers and Defender}
In this paper, attackers refer to users within the blockchain ecosystem that seek to gain profit through exploiting vulnerable smart contracts. The defenders, on the other hand, represent victims or regulatory authorities that maintain and secure the blockchain ecosystem.

\textbf{Attackers' Goal:} The attacker aims to exploit vulnerabilities in victim contracts by creating and deploying AECs to gain economic benefits.

\textbf{Attackers' Knowledge:} Attackers can access the source code and binary code of all contracts, including AECs and benign contracts. However, attackers do not know the SRG (Semantic Representation Graph) of these contracts, nor the defender’s defense strategies or implementation mechanisms.

\textbf{Attackers' Capability:} Attackers can modify the source code of AECs, but cannot directly control the corresponding binary code or SRG. This means that while the attacker can indirectly influence the binary and SRG by changing the AECs’ source code, this influence is not fully controllable.

\textbf{Defender's Goal:} During the smart contract pending period, the defender aims to distinguish between AECs and benign contracts using binary code analysis, and to provide explanations for detected AECs, clarifying the reasons for their classification and revealing the core attack logic of AECs.

\textbf{Defender's Knowledge:} The defender has access to the binary code and SRG of all contracts, as well as the source code of the victim and benign contracts, but cannot access the source code of AECs.

\textbf{Defender's Capability:} The defender cannot write or modify the source code, binary code, or SRG of any contract. This limitation means that commonly used purifying- or sanitation-based robust methods are not applicable. Therefore, the defender must rely on inherently robust detection methods.

\subsection{Adversarial Exploiter Contract Detection (AECD)}
We focus on detecting AECs before their deployment by distinguishing contracts based solely on their compiled bytecode. We propose an automated detection system that, for each input smart contract, (i) predicts whether it is an attacker contract and (ii) provides explanations to support its predictions.
Notably, we focus exclusively on attacks that require the involvement of AECs. DeFi attacks that do not require AECs \cite{zhou2024towards,chen2020phishing,kell2023forsage} are beyond our scope.

Specifically, let a smart contract bytecode be represented as \( C \). Given a dataset containing $N$ labeled smart contracts \( \{(C_i, y_i)\}_{i=1}^{N} \), where \( y_i \in \{0, 1\} \) indicates whether a contract is malicious (\( y_i = 1 \)) or benign (\( y_i = 0 \)), our goal is to propose a system $\mathcal{M}=\{\mathcal{G}, \mathcal{D}\}$, including a semantic relation graph constructor $\mathcal{G}$ and a self-explainable counterfactual detector $\mathcal{D}$. $\mathcal{G}$ constructs comprehensive representation \( G(V,E) \) for arbitrary bytecode \( C_i \), where $G$ is a semantic relation graph, $V$ and $E$ are node set and edge set respectively. $\mathcal{D}$ provides predictions $Y$ along with explanations ($S$, $R$) for \( G \), where $S$ is the factual subgraph and $R$ is the counterfactual subgraph:
\begin{align}
    (S,R,Y)=\mathcal{D}(G),\; G=\mathcal{G}(C_{i}), 1\leq i \leq N.
\end{align}

$S(V_{S}, E_{S})$ and $R(V_{R}, E_{R})$ fulfill:
\begin{align}\label{eq:sandr}
    & E_{S}\cup E_{R}=E, \; E_{S}\cap E_{R}=\emptyset, \; (E_{S}\subseteq E, \; E_{R}\subseteq E), \\
    & V_{S}\cup V_{R}=V, \; V_{S}\cap V_{R}\neq \emptyset, \; (V_{S}\subseteq V, V_{R}\subseteq V).
\end{align}

This means that all edges in $G$ are partitioned into two disjoint sets: one forming $S$ and the other forming $R$. $S$ contains the part that represents the core logic, while $R$ consists of nodes and edges that are considered irrelevant. Notably, the nodes contained in $S$ and $R$ may overlap, since edges connected to the same node can belong to different sets.




\section{Methodology}
In this section, we introduce SEASONED, our proposed self-explainable detection system for identifying Adversarial Exploiter Contracts. As shown in Fig.~\ref{fig:overview}, the system architecture comprises two integrated modules: a \textbf{data processing module} that constructs Semantic Relation Graphs (SRGs) from smart contract bytecode, and a \textbf{graph learning module} that performs joint prediction and explanation generation. 

Specifically, the data processing module disassembles contract bytecode into opcode sequences and builds SRGs to capture semantic relationships within each contract.
The graph learning module, guided by the Graph Information Bottleneck (GIB) \cite{GIB}, predicts whether an SRG is AEC or benign and generates counterfactual explanations. These explanations serve a dual purpose: they identify critical attack-relevant components while intrinsically enhancing detection performance. By learning representative information through GIB, the module achieves three interconnected advantages: it i) filters adversarial noise and decoy patterns to strengthen robustness against evasion techniques; ii) captures implementation-agnostic attack logic to improve generalizability across novel AEC variants; and iii) concentrates learning on fundamental malicious patterns to boost data efficiency significantly.


Together, these modules form a cohesive self-explainable detector where interpretability mechanisms actively contribute to functional improvements. The following subsections will elaborate on the technical implementation of both components.



\subsection{Semantic Relation Graph Representation}

\subsubsection{Data Extraction}
Smart contracts are written in high-level programming languages like Solidity and are subsequently compiled into low-level bytecode for automatic execution on the Ethereum Virtual Machine (EVM).
To conceal malicious intentions, AECs often do not disclose their source code publicly.
Accordingly, our analysis focuses on the bytecode.
We build a scanner to identify contract creation transactions in the pending pool.
In a creation transaction, the \texttt{To} field is empty (\texttt{null} or \texttt{0x0}) while a message call transaction has a \texttt{To} field that points to an existing contract or account.
Then, we extract the \texttt{input} field of these transactions, which contains the creation bytecode of contracts.
The extracted bytecode is passed on for further analysis.

\subsubsection{Intermediate Representation}
In this stage, we build the intermediate representation of contract bytecode.
We first convert the contract bytecode into EVM opcodes, as is shown in Fig.~\ref{fig:srg}(a).
EVM opcodes operate in a stack-based manner where operands are pushed onto the stack and operations are performed by popping these operands off the stack.
The EVM opcode representation makes it difficult to discern the relationships across opcodes.
For example, given EVM opcodes as shown in Listing \ref{lst:opcode}, the operands for each \texttt{ADD} opcode are not explicitly linked to their results, requiring careful tracking of the stack states to understand which values are being added.

\noindent
\begin{minipage}[t]{0.45\linewidth}
\lstset{language=EVM}
\begin{lstlisting}[caption={EVM Opcodes}, label={lst:opcode}]
02: PUSH 0x80;
04: PUSH 0x40;
08: ADD;
09: PUSH 0x20;
0b: ADD
0c: JUMP
\end{lstlisting}
\end{minipage}
\hfill
\begin{minipage}[t]{0.45\linewidth}
\lstset{language=RTL}
\begin{lstlisting}[caption={Register Transfer Language}, label={lst:rtl}]
V1 = 0x70; 
V2 = 0x40;
V3 = ADD V1 V2;
V4 = 0x20;
V5 = ADD V3 V4
JUMP V5
\end{lstlisting}
\end{minipage}

To address this issue, we transform opcodes into Register Transfer Language (RTL).
In the RTL format, registers are explicitly named and each register can only be assigned once. 
In the listing \ref{lst:rtl}, each step is explicitly defined, showing the exact data flow through the instructions.
Based on the RTL form, we proceed to construct the Control Flow Graph (CFG).
The process begins with the identification of basic blocks—groups of instructions that are executed sequentially without any jumps or branches.
Each basic block is then represented as a node in the CFG. 
Subsequently, we analyze the control flow opcodes, such as \texttt{JUMP} and \texttt{JUMPI}, which typically appear at the end of each basic block to determine their target destinations.
For instance, in Fig.~\ref{fig:srg}(b), \texttt{Block 0x109} is a destination of the \texttt{JUMPI} instruction (\texttt{0xc:JUMPI 0x109 V4} ) at the end of \texttt{Block 0x0}.
These control flow instructions define the edges of control flow graph.






\begin{figure*}[ht]
    \centering
    \includegraphics[width=\textwidth]{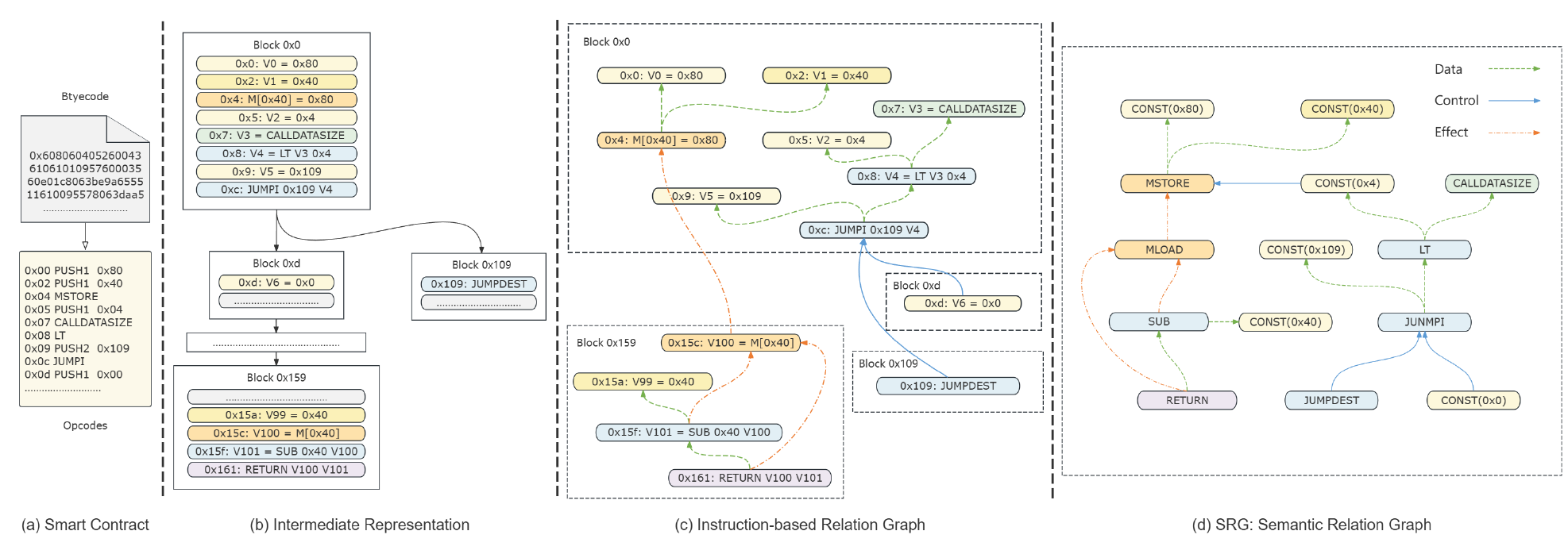}
    \caption{An illustrative example demonstrates the conversion process from contract bytecode to semantic relation graph. 
    (a) Initially, smart contracts are represented as bytecode, which is subsequently converted into opcode sequences. 
    (b) Contract opcodes are then transformed into register transfer language, and a control flow graph is constructed. 
    (c) Based on IR, control, data, and effect relations are established to form an instruction-based relation graph. 
    (d) Finally, the instruction-based relation graph is refined into a semantic relation graph.}
    \label{fig:srg}
\end{figure*}

\subsubsection{Semantic Relation Graph}

In this stage, we construct semantic relation graph based on the previously generated intermediate representation.
We use the AEC from a real-world attack, which targeted Inverse Finance \cite{halborn2022inverse}, as an illustrative example to present the detailed construction process. 

Observing the control flow graph in Fig.~\ref{fig:srg}(b), we can notice the implicit dependency relations between instructions.
To be specific, the execution order of \texttt{0xd:V6=0x0} and \texttt{0xc:JUMPI 0x109 V4}can not be changed according to the control flow path, while swapping instructions \texttt{0x0:V0=0x80} and \texttt{0x2:V1=0x40} do not affect the program result.
Nevertheless, instruction \texttt{0x4:M[0x40]=0x80} should execute after \texttt{0x0:V0=0x80} and \texttt{0x2:V1=0x40} as there is data dependency, and it should execute before \texttt{0x15c:V100=M[0x40]} as they both make a side effect in the same memory slot.

Based on this observation, we formulate three relations as edges between instruction nodes: control flow relation, data flow relation, and effect flow relation, to capture semantic relationships.
To obtain concise and efficient node representations, we simplify instructions into semantic opcodes.
Each instruction is condensed into a specific operator that encapsulates its primary semantics.
For operations like \texttt{V1=0x40}, we introduce a new semantic opcode \texttt{CONST} to indicate the assignment of a literal value. 
Stack-related operations such as \texttt{POP}, \texttt{PUSH1}-\texttt{PHSU32}, \texttt{SWAP1}-\texttt{SWAP16}, and \texttt{DUP1}-\texttt{DUP16} are removed as they do not imply semantic information.


\textbf{Control Flow Relation:} 
The control flow relationship captures the control-related semantics, wherein instructions or basic blocks are executed according to control structures such as conditionals.
The construction of the control flow relation edge is inherited from CFG.
Control flow edges between basic blocks are lifted based on specific branch-related semantic opcodes such as \texttt{JUMP}, \texttt{JUMPI}, and \texttt{JUMPDEST}. 
Let \( B = \{b_1, b_2, \dots, b_n\} \) represent the set of basic blocks, where each block contains a sequence of low-level opcodes.
Considering there exists a jumping operation from $b_i$ to $b_j$, the control relation edges \( E_C \) are constructed by connecting the branch opcodes connecting two basic blocks:
\begin{equation}
    E_C = \{(in_j, in_i) \mid in_j \in b_j, in_i \in b_i\},
\end{equation}

where $in_j$ represents a \texttt{JUMPDEST} instruction in block $b_j$, and $in_i$ is either a \texttt{JUMP} or a \texttt{JUMPI} instruction in block $b_i$.
$E_C$ is a directed edge from block $b_j$ to block $b_i$.



\textbf{Data Flow Relation:} Data flow relation reveals the def-use relationship, which refers to the connection between where a variable is defined (assigned a value) and where it is subsequently used (referenced). 
The construction of the def-use relation edge entails identifying and mapping the definitions and uses of variables within a program.
Since the contract bytecode has been converted to RTL format, the variable name and the node where it is assigned are extracted and stored as key-value pairs in the definition sites.

For each argument of an operation, if the operation originates from a literal value, the semantic opcode \texttt{CONST} is recorded.
For each argument of an operation, if the operation is not a direct constant assignment (i.e., its value originates from a literal value), the name and location of the variable used are recorded, and the corresponding entry in the use site mapping is updated.
Data dependencies are established whenever a variable is used and the variable already appears in the definition sites.
This dependency is represented as a def-use edge within the graph.
Let \( \text{Def}(v) \) and \( \text{Use}(v) \) denote the sets of definition and use sites of variable \( v \). The data flow relation edges \( E_D \) is constructed as:
\begin{equation}
    E_D = \{(in_{\text{u}}, in_{\text{d}}) \mid in_{\text{u}} \in \text{Use}(v), in_{\text{d}} \in \text{Def}(v) \},
\end{equation}

where \( in_{\text{u}}, in_{\text{d}} \) indicate instructions that use and define \( v \).

\textbf{Effect Flow Relation:}
The effect flow relation is proposed to capture the instruction that has a side effect on EVM storage or memory.
This relation is crucial as it reveals the constriction of the relative execution order between instructions that use or modify identical storage or memory slots.
For storage/memory store operations (e.g., \texttt{SSTORE}, \texttt{MSTORE}), the memory slot being affected is identified, and the instruction is recorded as a definition site in a mapping indexed by the memory slot.
For storage/memory load operations (e.g., \texttt{SLOAD}, \texttt{MLOAD}), the following steps are performed: (1) If the left-hand side variable of the instruction has a corresponding entry in the use mapping, its associated use site is retrieved and stored, else a new entry is created and populated with the use site. (2) The definition instruction corresponding to the use site is identified, and an effect edge is added to the SRG. (3) If a memory slot used by the instruction corresponds to a previously recorded definition site, an effect edge is similarly added to the SRG, connecting the current instruction to the relevant definition operation. 
We define effect flow relation edges as:
\begin{equation}
    E_E = \{(in_{\text{useEffect}}, in_{\text{effect}}) \mid in_{\text{effect}} \in \text{Def}(s)\},
\end{equation}

where \( s \) \text{ is a memory/storage variable}, \( in_{\text{effect}} \) corresponds to instructions such as \texttt{SLOAD} or \texttt{SSTORE}.

\begin{figure}[!t]
    \centering
    \includegraphics[scale=0.5]{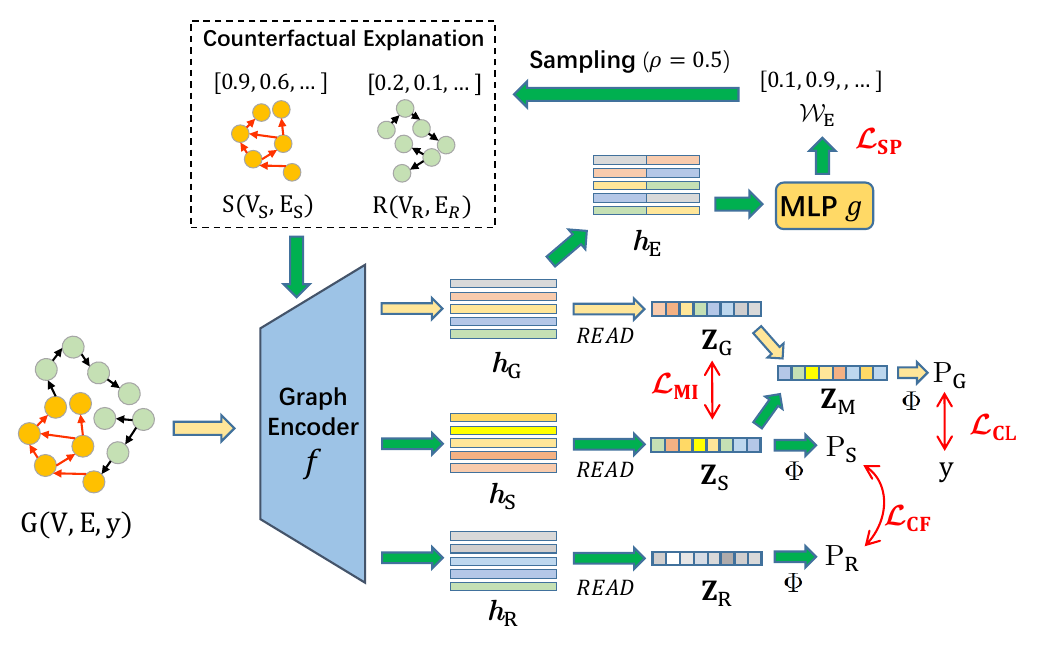}
    \caption{Framework of SCFED.}
    \vspace{-0.35cm}
    \label{fig:framework}
\end{figure}

\subsection{Detection with Self-CounterFactual Explainability}

To achieve both \textbf{Obj.1} and \textbf{Obj.2}, we propose Self-CounterFact Explainable Detector (SCFED). At a high level, given a SRG, SCFED  utilizes the self-explanatory technique to generate a factual subgraph and a counterfactual subgraph as the explanation (\textbf{Obj.1}). Furthermore, SCFED is inspired by GIB \cite{GIB}, which seeks to learn concise yet informative graph representations, and leverages this principle to extract representative embeddings from the factual subgraph, thereby enhancing detection performance (\textbf{Obj.2}).

\textbf{Overview.\ } As illustrated in Fig.~\ref{fig:framework}, SCFED integrates three core components: (1) a GNN encoder $f$ that generates node embeddings from the Semantic Relation Graph (SRG); (2) an MLP $g$ that computes edge weights from these embeddings, where higher-weight edges (indicating stronger predictive contribution) are assigned to the factual subgraph $S$ and lower-weight edges to the counterfactual subgraph $R$; and (3) a fully connected layer $\Phi$ that produces the graph prediction from aggregated node embeddings.  

Since the factual subgraph contains the core attack logic, we follow the approach of GIB and propose the loss function $\mathcal{L}_{\mathrm{MI}}$, which encourages the model $f$ to extract representative embeddings from $S$ and integrate them into the overall graph embeddings for label prediction.

We first detail SCFED's technical implementation and defer its theoretical foundations to the end of this section.




\subsubsection{Generation of Counterfactual Explanation} 
To achieve \textbf{Obj.1}, SCFED generates the counterfactual(cf) explanation $S(V_{S},E_{S})$ and $R(V_{R},E_{R})$ of the input graph $G(V,E,y)$. Essentially, this process can be formulated as an edge assignment problem. Inspired by the learnable edge dropping strategies proposed in~\cite{adgcl, pgexplainer}, we leverage a learnable edge assignment strategy to implement that. 

Specifically, SCFED utilizes learnable edge weights $\mathcal{W}_{E}\in\mathbb{R}^{|E|}$ to determine whether each edge should be assigned to subgraph $S$ or $R$:
\begin{equation}
    \mathcal{W}_{E} = g(h_{E}),\; h_{E} = T\cdot h_{N},\; h_{N} = f(G),
\end{equation}

where $g: \mathcal{R}^{d}\rightarrow\mathcal{R}$ is a MLP that learns edge weights from the $d-$dimansional edge embeddings $h_{E}\in\mathbb{R}^{|E|\times d}$, which is calculated from node embeddings $h_{N}\in\mathbb{R}^{|V|\times d}$. $T\in\mathbb{R}^{|E|\times |V|}$ is an incident matrix where $T_{ij}=0.5$ and $T_{ji}=0.5$ if $j-$th node is source or destination of $i-$th edge, otherwise $T_{ij}=0$ and $T_{ji}=0$. This ensures that the embedding of each edge is the average of the embeddings of its two endpoint nodes. $T$ is precomputed during preprocessing.

To ensure learnable explanation generation during the training, we apply the Gumbel-Max reparametrization trick \cite{gumblemax1,gumblemax2} to calculate a random variable $p_{e}\sim Bernoulli(w_{e})$ for $\forall e(u,v)\in E$ based on its edge weight $w_{e}\in \mathcal{W}_{E}$:
\begin{equation}
    p_{e}=Sigmoid((\log\delta-\log(1-\delta)+w_{e})/\tau),
\end{equation}

where $\delta\sim Uniform(0,1)$ and $\tau$ is a hyper-parameter. SCFED then samples edges with $p_{e}$ greater than the threshold $\rho$ to form the factual subgraph $S(V_{S}, E_{S})$, and other edges to form $R(V_{R}, E_{R})$:
\begin{equation}
    \begin{cases}
        e\in E_{S},\; (u,v)\in V_{S}, \quad \text{if}\;\; p_{e} > \rho, \\
        e\in E_{R},\; (u,v)\in V_{R}, \quad \;\,\text{otherwise}.
    \end{cases}
\end{equation}

Note $E_{S}\cup E_{R}=E$ and $E_{S}\cap E_{R}=\emptyset$, while $V_{S}\cup V_{R}=V$ but $V_{S}\cap V_{R}\neq \emptyset$.

To force SCFED to generate a concise explanation, $g$ is optimized by the sparsity loss $\mathcal{L}_{\mathrm{SP}}$:
\begin{equation}
    \mathcal{L}_{\mathrm{SP}} = ||\mathcal{W}_{E}||_{1}.
\end{equation}

$\mathcal{L}_{\mathrm{SP}}$ ensures that most elements in $\mathcal{W}_{E}$ are small, thereby resulting in fewer edges contained in $S$.

Following the acquisition of $S$ and $R$, we feed them into $f$ to obtain their graph-level embeddings $\mathrm{Z}_{S}, \mathrm{Z}_{R}\in\mathbb{R}^{d}$ and probabilities  $\mathrm{P}_{S},\mathrm{P}_{R}\in [0, 1]$, representing the likelihood that $S$ and $R$ are malicious, respectively.
\begin{align}\label{eq:subembedding}
    &\mathrm{P}_{S} = \Phi(\mathrm{Z}_{S}),\;\mathrm{Z}_{S}=READ(h_{S}),\; h_{S}=f(S), \\
    &\mathrm{P}_{R} =\Phi(\mathrm{Z}_{R}),\;\mathrm{Z}_{R}=READ(h_{R}),\; h_{R}=f(R),
\end{align}

where $\Phi: \mathcal{R}^{d}\rightarrow\mathcal{R}$ is a fully connected layer and $READ$ is the READOUT function of GNNs that aggregates node embeddings to derive a graph embedding. 

According to the property of counterfactual explanation, the counterfactual loss $\mathcal{L}_{\mathrm{CF}}$ is:
\begin{equation}\label{eq:loss_cf}
    \mathcal{L}_{\mathrm{CF}}=(1-y)\log(1+\mathrm{P}_{S}+\mathrm{P}_{R})-y\log\frac{(1+\mathrm{P}_{S}-\mathrm{P}_{R})}{2}.
\end{equation}

The first term of $\mathcal{L}_{\mathrm{CF}}$ ensures the predictions of both $S$ and $R$ are benign when the ground truth is benign (i.e., $y=0$). The second term ensures the prediction of $S$ is malicious while the prediction of $R$ is benign when $y=1$. 

For an input graph $G$ representing an AEC, $\mathcal{L}_{\mathrm{CF}}$ enforces the predictions of $S$ and $R$ are malicious and benign respectively, while $\mathcal{L}_{\mathrm{SP}}$ simultaneously constrains $S$ to maintain the minimal size. This optimization yields SCFED learns a compact yet discriminative subgraph $S$ that exclusively contains the core attack components responsible for G's malicious behavior. Simultaneously, attack-irrelevant graph elements (nodes and edges) are partitioned into $R$, thus filtering out benign elements that may otherwise obscure detection.

In summary, SCFED generates concise counterfactual explanations by minimizing $\mathcal{L}_{\mathrm{SP}}$ and $\mathcal{L}_{\mathrm{CF}}$.

\subsubsection{Explanation Enhanced Detection}
Since AECs in the real-world contain both attack-relevant components and attack-irrelevant components, the graph-level embedding $Z_{G}$ contains both attack-related and irrelevant information. In contrast, $S$ serving as a factual subgraph primarily captures the most attack-relevant parts of $G$. SCFED follows GIB theory to extract representative embeddings from $S$ to enhance the detection, thus achieving \textbf{Obj.2}. 

Specifically, SCFED merges $Z_{G}$ and $Z_{S}$ into $Z_{M}\in\mathbb{R}^{d}$ and minimizes the classification loss $\mathcal{L}_{\mathrm{CL}}$: 
\begin{align}\label{eq:loss_cl}
    & \mathcal{L}_{\mathrm{CL}}(\mathrm{P}_{G},y),\quad \mathrm{P}_{G}=\Phi(Z_{M}),\\
    Z_{M} &= \psi(Z_{G},Z_{S}),\; Z_{G}=READ(h_{G}),
\end{align}

where $\psi: \mathcal{R}^{d}\times\mathcal{R}^{d}\rightarrow\mathcal{R}^{d}$ is a linear mapping function (e.g., mean). $\mathcal{L}_{\mathrm{CL}}$ is a commonly used classification loss, such as the Cross-Entropy.

Eq.~\eqref{eq:loss_cl} indicates that SCFED leverages both graph-level information and subgraph-level information for detection by merging $Z_{G}$ and $Z_{S}$ into $Z_{M}$. 

To improve the representativeness of $Z_{S}$, SCFED follows GIB to minimize the mutual information shared by $Z_{G}$ and $Z_{S}$ via minimizing the mutual information loss $\mathcal{L}_{\mathrm{MI}}$:
\begin{align}\label{eq:los_mi}
    & \mathcal{L}_{\mathrm{MI}}= t^{*}(Z_{\mathrm{S}},Z_{\mathrm{G}}),\\
    \mathrm{s.t.}\; & t^{*}=\arg\min_{t} \;\mathcal{L}_{\mathrm{mine}}(t(Z_{\mathrm{S}},Z_{\mathrm{G}})).
\end{align}

Since the mutual information between $Z_{G}$ and $Z_{S}$ can not be calculated directly, we apply a MINE\cite{MINE} estimator $t: \mathcal{R}^{d}\times\mathcal{R}^{d}\rightarrow\mathcal{R}^{+}$ to approximate it. $t^{*}$ is a trained estimator by minimizing the MINE loss function $\mathcal{L}_{\mathrm{mine}}$. 

SCFED follows GIB through minimizing $\mathcal{L}_{\mathrm{CL}}$ and $\mathcal{L}_{\mathrm{MI}}$, thus enhancing the detection performance from two aspects:
\begin{itemize}
    \item[1. ]Reducing the redundant mutual information shared by $Z_{G}$ and $Z_{S}$, which is irrelevant to the label prediction. 
    \item[2. ]Forcing $Z_{S}$ to be independent to $Z_{G}$ thus keeping the invariance of $Z_{S}$, thus allowing the detector to learn information that is most relevant to the attack and independent of the specific implementation of the smart contract.  
\end{itemize}

We provide detailed theoretical discussions of the above claims in Sec.~\ref{sec:theoretical}.

Overall, SCFED is trained by a bi-level optimization:
\begin{align}\label{eq:loss_total}
    \mathcal{L} & = \mathcal{L}_{\mathrm{CL}}+\alpha\cdot\mathcal{L}_{\mathrm{CF}}+\beta\cdot\mathcal{L}_{\mathrm{MI}}+\gamma\cdot\mathcal{L}_{\mathrm{SP}}, \\
    & \mathrm{s.t.}\;t^{*}=\arg\min_{t} \;\mathcal{L}_{\mathrm{mine}}(t(h_{\mathrm{S}},h_{\mathrm{G}})),\\
    &\mathcal{D}^{*}=\arg\min_{\mathcal{D}}\,\mathcal{L},\; \mathcal{D}=\{f,g,\Phi\},
\end{align}

where $\mathcal{D}$ is the detector contains the graph encoder $f$, MLP $g$ and fully connected layer $\Phi$. $\mathcal{L}$ and $\mathcal{L}_{\mathrm{mine}}$ are optimization objectives for the outer loop and the inner loop, respectively, $\alpha, \beta, \gamma$ are hyperparameters.  $\mathcal{L}_{\mathrm{CF}}$ and $\mathcal{L}_{\mathrm{SP}}$ are responsible for encouraging SCFED to generate concise counterfactual explanations, while $\mathcal{L}_{\mathrm{CL}}$ and $\mathcal{L}_{\mathrm{MI}}$ ensure that the model effectively combines graph embedding and subgraph embedding to achieve accurate detection.

By minimizing $\mathcal{L}$, SCFED follows GIB theory to utilize graph-level information and representative subgraph-level information to empower the optimal detector $\mathcal{D}^{*}$. Theoretical guarantees are formalized in Sec.~\ref {sec:theoretical}.

\subsubsection{Theoretical Analysis}
\label{sec:theoretical}
We demonstrate that SCFED follows Graph Information Bottleneck (GIB) \cite{GIB} to leverage both graph-level and subgraph-level information for detection and simultaneously reduce redundant information.

GIB aims to learn node or graph representations that retain the most relevant information for a target task while discarding irrelevant details from the input graph, thereby improving generalization in graph-based learning tasks. To begin with, we indicate that SCFED essentially learns a powerful detector $\mathcal{D}^{*}$ through the following ideal training objective:
\begin{equation}\label{eq:ideal_obj}
    \mathcal{D}^{*}=\arg\max_{\mathcal{D}}\;I(Z_{\mathrm{G}};Y)+I(Z_{\mathrm{S}};Y|Z_{\mathrm{G}})- I(Z_{\mathrm{S}};Z_{\mathrm{G}}|Y).
\end{equation}

The meaning of each term of Eq.~\ref{eq:ideal_obj} is as follows:
\begin{itemize}
    \item[1. ]$I(Z_{\mathrm{G}};Y)$ measure the mutual information shared by $Z_{\mathrm{G}}$ and $Y$, maximizing it to ensure that $Z_{\mathrm{G}}$ captures the most discriminative information to predict $Y$.
    \item[2. ]$I(Z_{\mathrm{S}};Y|Z_{\mathrm{G}})$ represents conditional mutual information between $Z_{\mathrm{S}}$ and $Y$ , given $Z_{\mathrm{G}}$. It quantities information shared by $Z_{\mathrm{S}}$ and $Y$ that is irrelevant to $Z_{\mathrm{G}}$. Maximizing it encourages $Z_{\mathrm{S}}$ to extract complementary predictive information that independent to the specific contract. 
    \item[3. ]$I(Z_{\mathrm{S}};Z_{\mathrm{G}}|Y)$ is the conditional mutual information between $Z_{\mathrm{S}}$ and $Z_{\mathrm{G}}$, given $Y$. It measures label-irrelevant information shared by $Z_{\mathrm{S}}$ and $Z_{\mathrm{G}}$. Minimizing it forces $\mathcal{D}^{*}$ to reduce redundant information, and thus focus on the most representative information for detection.
\end{itemize}

Eq.~\ref{eq:ideal_obj} indicates that the optimal $\mathcal{D}^{*}$ not only maximizes the information shared by $Z_{\mathrm{G}}$ and $Y$ (as GNN-based detectors typically do), but also extracts additional information from the factual subgraph to enhance the detection. Besides, $\mathcal{D}^{*}$ prevents the potential performance degradation caused by redundant information between $Z_{\mathrm{G}}$ and $Z_{\mathrm{S}}$.

Since direct computation of conditional mutual information is intractable, we instead achieving the ideal training objective shown in Eq.~\ref{eq:ideal_obj} by maximizing its lower bound:
\begin{equation}\label{eq:lower_bound}
	\mathcal{D}^{*} = \arg\max_{\mathcal{D}}\; I(\psi(Z_{\mathrm{S}},Z_{\mathrm{G}});Y)-I(Z_{\mathrm{S}};Z_{\mathrm{G}}),
\end{equation}

We provide Theorem.~\ref{theorem:lower_bound} to support this claim:
\begin{theorem}\label{theorem:lower_bound}
    Given the linear mapping function $\psi$, maximizing Eq.~\eqref{eq:lower_bound} is equal to maximize Eq.~\eqref{eq:ideal_obj}.
\end{theorem}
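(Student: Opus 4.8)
The plan is to establish the claimed equivalence by rewriting the three-term objective in \eqref{eq:ideal_obj} with standard (conditional) mutual-information identities until only the two terms of \eqref{eq:lower_bound} remain, and then to account for the residual discrepancy through the data-processing inequality applied to the deterministic map $\psi$. Throughout I would treat $Z_G$, $Z_S$, and $Y$ as three jointly distributed random variables whose law is controlled by the detector $\mathcal{D}=\{f,g,\Phi\}$, which is the object being optimized.

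First I would apply the chain rule of mutual information to the leading pair of terms, $I(Z_G;Y)+I(Z_S;Y\mid Z_G)=I((Z_S,Z_G);Y)$, so that \eqref{eq:ideal_obj} collapses to $I((Z_S,Z_G);Y)-I(Z_S;Z_G\mid Y)$. I would then compare this with \eqref{eq:lower_bound}, namely $I(\psi(Z_S,Z_G);Y)-I(Z_S;Z_G)$, term by term. The two predictive terms are linked by the data-processing inequality: since $Z_M=\psi(Z_S,Z_G)$ is a deterministic function of $(Z_S,Z_G)$, we have $I(\psi(Z_S,Z_G);Y)\le I((Z_S,Z_G);Y)$, with equality precisely when $Z_M$ retains all label-relevant information carried by $(Z_S,Z_G)$. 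The two compression terms differ only by the interaction information, because $I(Z_S;Z_G\mid Y)=I(Z_S;Z_G)-I(Z_S;Z_G;Y)$.

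Assembling these pieces, the difference between \eqref{eq:ideal_obj} and \eqref{eq:lower_bound} equals the data-processing gap $I((Z_S,Z_G);Y)-I(\psi(Z_S,Z_G);Y)$ plus the interaction information $I(Z_S;Z_G;Y)$. I would argue that under the linear, information-preserving merge $\psi$ the first gap vanishes, reducing the claim to showing that the interaction term does not move the location of the maximum. Equivalently, I would carry the cancellation one step further — applying the symmetric identity $I(Z_S;Z_G;Y)=I(Z_S;Y)-I(Z_S;Y\mid Z_G)$, the conditional contributions cancel and \eqref{eq:ideal_obj} simplifies to $I(Z_G;Y)+I(Z_S;Y)-I(Z_S;Z_G)$ — and then show that this form is maximized by the same $\mathcal{D}$ that maximizes \eqref{eq:lower_bound}, since both share the identical compression penalty $-I(Z_S;Z_G)$.

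The hard part will be reconciling the two residuals simultaneously. The map $\psi:\mathbb{R}^d\times\mathbb{R}^d\to\mathbb{R}^d$ reduces dimension and hence cannot be globally injective, so the data-processing inequality is generically strict; and the interaction information $I(Z_S;Z_G;Y)$ has indefinite sign, so it cannot be dismissed by a one-sided bound. I would therefore make explicit the modeling assumptions that are genuinely required: that $\psi$ is a sufficient statistic for $Y$ relative to $(Z_S,Z_G)$, so the data-processing gap is zero, and that at the optimum the label-relevant redundancy between $Z_S$ and $Z_G$ is already driven out by the penalty $I(Z_S;Z_G)$, so the interaction term is negligible at the maximizer. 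Under these conditions the two objectives share the same $\arg\max$, which is exactly the statement of the theorem; the remaining work is to verify that these assumptions are compatible with the specific linear $\psi$ (e.g.\ the mean) fixed in the construction.
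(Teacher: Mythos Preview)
Your first two moves---the chain rule $I(Z_G;Y)+I(Z_S;Y\mid Z_G)=I((Z_S,Z_G);Y)$ and the data-processing inequality $I(\psi(Z_S,Z_G);Y)\le I((Z_S,Z_G);Y)$---are exactly the ones the paper uses. From there, however, you and the paper diverge sharply in ambition.

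The paper does \emph{not} attempt the equality-of-argmax analysis you outline. It simply chains the two inequalities above with a third one, $-I(Z_S;Z_G)\le -I(Z_S;Z_G\mid Y)$ (stated without justification; this is precisely your observation that the interaction information is being tacitly assumed nonnegative), to conclude that \eqref{eq:lower_bound} is a lower bound of \eqref{eq:ideal_obj}, and then asserts in one sentence that ``maximizing Eq.~\eqref{eq:lower_bound} can maximize Eq.~\eqref{eq:ideal_obj}.'' There is no treatment of the data-processing gap, no discussion of when the interaction term vanishes, and no argument that the two objectives share an $\arg\max$. In other words, the issues you correctly flag as ``the hard part''---the generic strictness of DPI for a dimension-reducing $\psi$ and the indefinite sign of $I(Z_S;Z_G;Y)$---are not addressed in the paper at all; the paper's proof is really a lower-bound heuristic rather than an equivalence.

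So your proposal is more careful than the original, but also substantially harder to complete, and the extra assumptions you introduce (sufficiency of $\psi$, negligibility of the interaction term at the optimum) go well beyond anything the paper states or argues. If your goal is to reproduce the paper's proof, you should stop after establishing the chain of inequalities and note that the paper treats ``lower bound'' as a surrogate for the ideal objective. If your goal is to actually prove the theorem as stated, be aware that neither the paper nor your sketch closes the gap without those additional, unproven hypotheses.
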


\begin{proof}
    Since $\psi$ is a linear mapping, according to the principles of mutual information, we have:
    \begin{align}
    	I(\psi(Z_{\mathrm{S}},Z_{\mathrm{G}});Y) \leq I((Z_{\mathrm{S}},Z_{\mathrm{G}});Y).
    \end{align}
    
    $(Z_{\mathrm{S}},Z_{\mathrm{G}})$ is the joint distribution of $Z_{\mathrm{S}}$ and $Z_{\mathrm{G}}$. The following inequality holds:
    \begin{equation}\label{eq:ineq1}
        I(\psi(Z_{\mathrm{S}},Z_{\mathrm{G}});Y)-I(Z_{\mathrm{S}};Z_{\mathrm{G}})\leq I((Z_{\mathrm{S}},Z_{\mathrm{G}});Y)-I(Z_{\mathrm{S}};Z_{\mathrm{G}}).
    \end{equation}

    According to the principles of mutual information, we rewrite the right side of the above inequality:
    \begin{align}\label{eq:ineq2}
        \notag&\;I((Z_{\mathrm{S}},Z_{\mathrm{G}});Y)-I(Z_{\mathrm{S}};Z_{\mathrm{G}})\\
        \notag&=I(Z_{\mathrm{G}};Y)+I(Z_{\mathrm{S}};Y|Z_{\mathrm{G}})-I(Z_{\mathrm{S}};Z_{\mathrm{G}})\\
        &\leq I(Z_{\mathrm{G}};Y)+I(Z_{\mathrm{S}};Y|Z_{\mathrm{G}})-I(Z_{\mathrm{S}};Z_{\mathrm{G}}|Y).
    \end{align}
    
    Based on Eq.~\eqref{eq:ineq1} and Eq.~\eqref{eq:ineq2}, we have:
    \begin{align}
        \notag I(\psi(Z_{\mathrm{S}},Z_{\mathrm{G}});Y)-I(Z_{\mathrm{S}};Z_{\mathrm{G}})\leq I(Z_{\mathrm{G}};Y)&+I(Z_{\mathrm{S}};Y|Z_{\mathrm{G}})\\
        &-I(Z_{\mathrm{S}};Z_{\mathrm{G}}|Y),
    \end{align}

    which indicates Eq.~\eqref{eq:lower_bound} is the lower bound of Eq.~\eqref{eq:ideal_obj}. Therefore, maximizing Eq.~\eqref{eq:lower_bound} can maximize Eq.~\eqref{eq:ideal_obj} and achieve the ideal training objective.
\end{proof}

Next, we demonstrate that SCFED maximizes Eq.~\eqref{eq:lower_bound} by minimizing $\mathcal{L}_{\mathrm{CL}}$ and $\mathcal{L}_{\mathrm{MI}}$:
\begin{theorem}\label{theorem:loss}
    Minimizing $\mathcal{L}_{\mathrm{CL}}$ and $\mathcal{L}_{\mathrm{MI}}$ maximizes Eq.~\eqref{eq:lower_bound}.
\end{theorem}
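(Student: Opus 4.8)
The plan is to establish the theorem term by term, matching each summand of Eq.~\eqref{eq:lower_bound} to one loss. Writing $Z_{M}=\psi(Z_{\mathrm{S}},Z_{\mathrm{G}})$ as in Eq.~\eqref{eq:loss_cl}, the objective to be maximized is $I(Z_{M};Y)-I(Z_{\mathrm{S}};Z_{\mathrm{G}})$. I would show that (i) minimizing $\mathcal{L}_{\mathrm{CL}}$ maximizes a variational lower bound on the first term $I(Z_{M};Y)$, and (ii) minimizing $\mathcal{L}_{\mathrm{MI}}$ minimizes the MINE estimate of the second term $I(Z_{\mathrm{S}};Z_{\mathrm{G}})$. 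Because the two losses act on disjoint terms with the correct signs, jointly decreasing them increases the objective.

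For step (i), I would begin from the identity $I(Z_{M};Y)=H(Y)-H(Y\mid Z_{M})$ and observe that $H(Y)$ is fixed by the data distribution and independent of the detector $\mathcal{D}$, so maximizing $I(Z_{M};Y)$ amounts to minimizing $H(Y\mid Z_{M})$. Treating the classifier $\Phi$ as a variational posterior $q(y\mid z_{M})$ with $\mathrm{P}_{G}=\Phi(Z_{M})$, the non-negativity of the Kullback--Leibler divergence yields
\[
H(Y\mid Z_{M})=-\mathbb{E}_{p(y,z_{M})}[\log p(y\mid z_{M})]\le-\mathbb{E}_{p(y,z_{M})}[\log q(y\mid z_{M})]=\mathcal{L}_{\mathrm{CL}},
\]
where the right-hand side is exactly the expected cross-entropy $\mathcal{L}_{\mathrm{CL}}(\mathrm{P}_{G},y)$. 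Hence $I(Z_{M};Y)\ge H(Y)-\mathcal{L}_{\mathrm{CL}}$, so minimizing $\mathcal{L}_{\mathrm{CL}}$ raises this lower bound and thereby maximizes $I(\psi(Z_{\mathrm{S}},Z_{\mathrm{G}});Y)$ in the variational sense, with equality when $\Phi$ realizes the true posterior.

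For step (ii), I would appeal to the Donsker--Varadhan/MINE representation used to define $\mathcal{L}_{\mathrm{MI}}$. The inner optimization $t^{*}=\arg\min_{t}\mathcal{L}_{\mathrm{mine}}$ selects the statistic attaining the supremum of $\mathbb{E}_{P_{Z_{\mathrm{S}}Z_{\mathrm{G}}}}[t]-\log\mathbb{E}_{P_{Z_{\mathrm{S}}}\otimes P_{Z_{\mathrm{G}}}}[e^{t}]$, whose value is a consistent estimate of $I(Z_{\mathrm{S}};Z_{\mathrm{G}})$. Thus $\mathcal{L}_{\mathrm{MI}}=t^{*}(Z_{\mathrm{S}},Z_{\mathrm{G}})\approx I(Z_{\mathrm{S}};Z_{\mathrm{G}})$, and since $Z_{\mathrm{S}},Z_{\mathrm{G}}$ are produced by $\mathcal{D}$, minimizing $\mathcal{L}_{\mathrm{MI}}$ over $\mathcal{D}$ minimizes the second term. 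Combining (i) and (ii), the outer objective $\mathcal{L}_{\mathrm{CL}}+\beta\mathcal{L}_{\mathrm{MI}}$ decreases exactly when the variational surrogate of $I(Z_{M};Y)-I(Z_{\mathrm{S}};Z_{\mathrm{G}})$ increases, which is the claim.

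The main obstacle is that both links are through bounds rather than identities: the cross-entropy bound is tight only when $\Phi$ recovers $p(y\mid z_{M})$, and the MINE bound is tight only at the inner optimum and in the large-sample limit. I would therefore state explicitly the standing assumptions -- a sufficiently expressive $\Phi$ and statistic network $t$, and the inner loop solved to optimality -- under which the surrogates coincide with the true information terms. The most delicate point to argue carefully is the bi-level coupling: since $t^{*}$ is re-optimized for every $\mathcal{D}$, I must verify that at the inner optimum $\mathcal{L}_{\mathrm{MI}}$ tracks $I(Z_{\mathrm{S}};Z_{\mathrm{G}})$ as $\mathcal{D}$ varies, so that outer descent on $\mathcal{L}_{\mathrm{MI}}$ is a genuine descent on the mutual information and not merely on a loose, frozen bound.
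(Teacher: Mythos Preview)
Your proposal is correct and follows essentially the same term-by-term route as the paper: decompose $I(Z_M;Y)=H(Y)-H(Y\mid Z_M)$ and link it to $\mathcal{L}_{\mathrm{CL}}$, then use the MINE estimate $t^{*}$ to identify $\mathcal{L}_{\mathrm{MI}}$ with $I(Z_{\mathrm{S}};Z_{\mathrm{G}})$. The only difference is one of rigor rather than strategy: the paper asserts $\mathcal{L}_{\mathrm{CL}}=H(Y\mid Z_M)$ outright, whereas you (more carefully) treat $\Phi$ as a variational posterior so that $\mathcal{L}_{\mathrm{CL}}\ge H(Y\mid Z_M)$ with equality only when $\Phi$ recovers the true conditional, and you make the analogous tightness caveats for MINE explicit.
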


\begin{proof}
    Given $Z_{\mathrm{M}}=\psi(Z_{\mathrm{S}},Z_{\mathrm{G}})$, to implement the first term $I(Z_{\mathrm{M}};Y)$ of Eq.~\eqref{eq:lower_bound}, we leverage the cross-entropy loss to implement $\mathcal{L}_{\mathrm{CL}}$ since the cross-entropy loss is equal to the conditional information entropy $H(Y|Z_{\mathrm{M}})$:
    
    \begin{equation}
        \mathcal{L}_{\mathrm{CL}}(Z_{\mathrm{M}},Y) = H(Y|Z_{\mathrm{M}}) = \mathbb{E} \left[ -\log \mathrm{P}(Y|Z_{\mathrm{M}}) \right].
    \end{equation}
    
    According to the principle of mutual information, we derive:
    \begin{align}
        I(h_{\mathrm{M}}; Y) &= H(Y)-H(Y|Z_{\mathrm{M}}),\\
        &=  H(Y)-\mathcal{L}_{\mathrm{CL}}(Z_{\mathrm{M}},Y).
    \end{align}
    
    As $H(Y)$ constant given the label distribution, minimizing $\mathcal{L}_{\mathrm{CL}}$ is equal to maximizing $I(Z_{\mathrm{M}}; Y)$:
    \begin{equation}
        \min\; \mathcal{L}_{\mathrm{CL}}(Z_{\mathrm{M}},Y) \implies \max\; I(Z_{\mathrm{M}}; Y).
    \end{equation}
    
    As for the second term of Eq.~\eqref{eq:lower_bound}, we utilize MINE~\cite{MINE} $t: \mathbb{R}^{d}\times \mathbb{R}^{d}\rightarrow \mathbb{R}$ to estimate the value of $I(Z_{\mathrm{S}};Z_{\mathrm{G}})$ since it is hard to compute the mutual information directly. A trained estimator $t^{*}$ outputs the estimated value $t^{*}(Z_{\mathrm{S}},Z_{\mathrm{G}})$. Therefore, the minimization of $I(Z_{\mathrm{S}};Z_{\mathrm{G}})$ is equal to minimize the estimated value:
    \begin{align}
        \min\;& \mathcal{L}_{\mathrm{MI}}  \implies \min\; I (Z_{\mathrm{S}};Z_{\mathrm{G}}),\quad\mathcal{L}_{\mathrm{MI}} = t^{*}(Z_{\mathrm{S}},Z_{\mathrm{G}}), \\
        & \mathrm{s.t.}\; t^{*}=\arg\min_{t} \;\mathcal{L}_{\mathrm{mine}}(t(Z_{\mathrm{S}},Z_{\mathrm{G}})).
    \end{align}

    Overall, we demonstrate:
    \begin{align}
        \arg\min_{\mathcal{D}}\;\mathcal{L}_{\mathrm{CL}}+\mathcal{L}_{\mathrm{MI}}\implies \arg\max_{\mathcal{D}}\;I(Z_{\mathrm{M}};Y)-I(Z_{\mathrm{S}};Z_{\mathrm{G}}).
    \end{align}
\end{proof}

\section{Experiments}

In this section, we evaluate SEASONED through extensive experiments addressing three research questions:

\noindent \textbf{RQ1 Detection \& Explainability:} How effectively does SEASONED detect Adversarial Exploiter Contracts (AECs) while generating counterfactual explanations? (\textbf{Obj.1})

\noindent \textbf{RQ2 Multidimensional Performance Enhancement:} To what extent does SEASONED improve robustness, generalizability, and data efficiency? (\textbf{Obj.2})

\noindent \textbf{RQ3 Component Efficacy:} What is the individual contribution of SRG construction and SCFED's explanation-generation mechanism to overall performance?

\subsection{Experimental Settings}
\paragraph{Implementation}
Our experiment environment is comprised of a server with a 12-core Intel(R)-Core-12700k CPU @3.61 GHz, an NVIDIA GeForce RTX 3090 Ti GPU, 128GB of RAM, and the Ubuntu 24.04 LTS operating system.

For the construction of SRGs, we employ the web3.py library to extract contract bytecode and Vandal~\cite{brent_vandal_2018} to generate the intermediate representation. Subsequently, we construct the SRG by capturing semantic relationships within the IR.
For SCFED's implementation, we use a 2-layer RGCN\cite{schlichtkrull2018modeling} with 8-dimensional hidden states as the graph encoder $f$, since RGCN is well-suited for capturing semantic relations information present in the SRG. Additionally, we employ a 2-layer MLP with 8-dimensional hidden states as $g$. The functions $\Phi$ and $\psi$ are implemented as a fully-connected layer and a mean function, respectively. SCFED is trained using the Adam optimizer with a learning rate of 0.001. The training process employs a bi-level optimization scheme with 50 epochs for the outer loop and 100 epochs for the inner loop. All experiments adopt the hyperparameters $\{\alpha=0.5, \beta=0.5, \gamma=0.1, \rho=0.5\}$. We conduct 10-fold cross-validation and report the average evaluation results.

\paragraph{Dataset}
To compose a comprehensive dataset, we collected AECs used in real-world attack incidents on Ethereum from 5 public databases: DefiLlama~\cite{defillama}, rekt news~\cite{rekt}, SlowMist~\cite{slowmist}, Neptune Mutual~\cite{neptune}, and ChainSec~\cite{chainsec}.
We gathered a total of 359 AECs involved in attacks occurring between June 2016 and January 2025.
It is important to note that our dataset focuses exclusively on contracts with active attack intent and primary attack logic, excluding contracts associated with general fraud or phishing, such as Ponzi schemes and phishing attacks.
For benign contracts, we selected contracts 
from Forta's benign contracts dataset to ensure diversity in sample types.
The final dataset comprises 359 AECs and 1,196 benign contracts.




\paragraph{Metrics}
We evaluate detection performance using Precision (\textbf{P}), Recall (\textbf{R}), and F1-score (\textbf{F1}), which respectively measure the accuracy, completeness, and overall balance of positive samples identification. For explanation evaluation, following \cite{cf2} and \cite{nseg}, we use probability of necessity (\textbf{PN}), probability of sufficiency (\textbf{PS}), and \textbf{Avg.Size}, the average size of factual subgraphs.


\paragraph{Baselines}
To evaluate the detection and explanation effectiveness of the proposed method, we compare it with three categories of baselines:
\begin{itemize}
    \item \textbf{State-of-the-Art (SOTA) AECD Methods}: We compare with three leading approaches: \textit{Forta}~\cite{forta2023}, \textit{LookAhead}~\cite{ren2024lookahead}, and \textit{Skyeye}~\cite{wang2024skyeye} to benchmark our method against the current SOTA methods in adversarial exploiter contracts detection.
    \item \textbf{GNN-based Detectors}: We apply \textit{GCN}~\cite{GCN}, \textit{GAT}~\cite{GAT}, \textit{GIN}~\cite{GIN} and \textit{RGCN}~\cite{schlichtkrull2018modeling} models to the constructed SRGs for detection.
    \item \textbf{Post-hoc Explainers}: We compare our methods with several representative post-hoc GNN explainers, including \textit{GNNExplainer}~\cite{gnnexplainer}, \textit{PGExplainer}~\cite{pgexplainer}, the counterfactual explainer \textit{CF$^2$}~\cite{cf2}, and the \textit{Necessary and Sufficient Explanation for Graph (NSEG)}~\cite{nseg}.
\end{itemize}


    
    
    

\subsection{Detection and Explanation Results (RQ1)}
\label{sec:expl_results}
\paragraph{Detection Results}
As shown in Table.~\ref{tab:result}, F1 score of our approach is over 0.99, which significantly outperforms SOTA AECD methods, and the inferior performance of other GNN-based detectors compared to RGCN+SRG underscores the importance of learning distinct relationship types in the semantic relation graph. 

\begin{table}[!ht]
    \centering
    \small
    \begin{tabular}{ l |ccc}
        \toprule
        Methods & \textbf{Precision} & \textbf{Recall} & \textbf{F1-Score} \\
        \midrule
        Forta \cite{forta2023} & 0.8860 & 0.5940 & 0.7112 \\
        LookAhead \cite{ren2024lookahead} & 0.9273 & 0.9623 & 0.9444 \\
        Skyeye \cite{wang2024skyeye} & 0.9454 & 0.9286 & 0.9369 \\
        \midrule
        GCN+SRG & 0.9012 & 0.8794 & 0.8823 \\
        GAT+SRG & \underline{0.9551} & 0.8919 & 0.9108 \\
        GIN+SRG & 0.9524 & 0.9581 & 0.9570 \\
        RGCN+SRG & \textbf{1.0} & 0.9918 & 0.9901 \\
        \midrule
        SEASONED & \textbf{1.0} & \textbf{0.9976} & \textbf{0.9988} \\ 
        \bottomrule
    \end{tabular}
    \caption{Detection comparison.}
    \label{tab:result}
\end{table}

The performance gap between SEASONED and SOTA methods may stem from the reliance of existing AECD methods on syntactical features, which may fail to fully capture latent dependencies that reveal semantic features within contract bytecode.
In contrast, our proposed SRG effectively captures essential semantic relationships, thereby enabling more accurate detection of adversarial exploiter contracts.

\paragraph{Explanation Results}
As presented in Table~\ref{tab:explain}, SEASONED demonstrates superior performance compared to post-hoc explainers. GNNExplainer and PGExplainer yield the lowest PN values because they solely pursue label consistency between explanation subgraphs and graphs, while disregarding residual graph labels. Although CF$^2$ and NSEG achieve relatively high PS values, their PN performance remains limited. SEASONED produces comparable but more balanced results, delivering satisfactory explanation performance. Notably, while SEASONED does not achieve the smallest average explanation size (Avg.Size), we argue that its dual optimization objective effectively balances concise explanation with representative information learning, ensuring both interpretability (Obj.1) and detection enhancement (Obj.2).

\begin{table}[!ht]
    \centering
    \small
    \begin{tabular}{ l | c c c }
        \toprule
        Explainers & \textbf{PS} & \textbf{PN} & \textbf{Avg.Size} \\
        \midrule
        GNNExlainer & 0.85 & 0.54 & 714.3 \\
        PGExlainer & 0.89 & 0.77 & 662.0 \\
        CF$^2$ & 0.96 & \textbf{1.0} & 571.8 \\
        NSEG & \underline{0.98} & 0.97 & \underline{492.2} \\
        \midrule
        SEASONED w/o $\mathcal{L}_{\mathrm{CF}}$  & 0.84 & 0.15 & \textbf{421.7} \\
        SEASONED w/o $\mathcal{L}_{\mathrm{SP}}$  & \textbf{0.99} & 0.74 & 1,150.6 \\
        SEASONED w/o $\mathcal{L}_{\mathrm{MI}}$  & 0.94 & \underline{0.99} & 688.4 \\
        \midrule
        SEASONED & 0.96 & \textbf{1.0} & 646.9 \\
        \bottomrule
    \end{tabular}
    \caption{Explanation comparison.}
    \label{tab:explain}
\end{table}

\begin{table}[!ht]
\centering
\resizebox{\linewidth}{!}{
    \begin{tabular}{c|lc|lc}
    \toprule
    \textbf{Feature} & \multicolumn{2}{c|}{\textbf{Benign Contracts}} & \multicolumn{2}{c}{\textbf{AECs}} \\ \midrule
    \multirow{5}{*}{\makecell[c]{Top 5 Opcodes\\(Opcode \& Ratio)}} & CONST & 0.27 & CONST & 0.32 \\
    & ADD & 0.10 & \textbf{JUMPDEST} & 0.09 \\
    & \textbf{MLOAD} & 0.09 & ADD & 0.08 \\
    & \textbf{MSTORE} & 0.08 & \textbf{JUMP} & 0.05 \\
    & JUMPI & 0.07 & JUMPI & 0.05 \\ \midrule
    \multirow{3}{*}{Relation \& Ratio} & Data & 0.68 & Data & 0.60 \\
    & Control & \textbf{0.19} & Control & \textbf{0.34} \\
    & Effect & 0.13 & Effect & 0.06 \\ \midrule
    Avg. Path Length & \multicolumn{2}{c|}{\textbf{3.61}} & \multicolumn{2}{c}{\textbf{10.52}} \\\bottomrule
    \end{tabular}}
\caption{Explanation Analysis.}
\label{tab:subgraph_stats}
\end{table}

Furthermore, we analyze the explanation results of SEASONED and present the statistical comparison in Table~\ref{tab:subgraph_stats}, which reveals three key observations:

\begin{itemize}
\item \textbf{Opcode Distribution Divergence}: Comparing top-5 most frequent opcodes in factual subgraphs, AECs exhibit higher proportions of \texttt{JUMP}-related opcodes (\texttt{JUMP}, \texttt{JUMPI}, \texttt{JUMPDEST}), whereas benign contracts are dominated by data operations (\texttt{MLOAD}, \texttt{MSTORE}).

\item \textbf{Relation Type Bias}: AECs place greater emphasis on control relations (0.34 vs 0.19 for benign contracts) while exhibiting fewer effect relations (0.06 vs 0.13).

\item \textbf{Structural Complexity Gap}: The subgraphs of AECs demonstrate an average path length that is 2.9 times longer than that of benign contracts (10.52 vs 3.61), indicating that AECs tend to have longer execution paths.
\end{itemize}

In summary, AECs show a stronger bias toward long execution paths and control-flow manipulation, whereas benign contracts focus more on memory operations.

\begin{table*}[ht]
\centering
\small
\begin{tabular}{l|ccc|ccc|ccc|ccc|ccc}
\toprule
    \multirow{2}{*}{\shortstack{Attacks\\$K$-$M$}} & \multicolumn{3}{c|}{GCN+SRG} & \multicolumn{3}{c|}{GAT+SRG} & \multicolumn{3}{c|}{GIN+SRG} & \multicolumn{3}{c|}{RGCN+SRG} & \multicolumn{3}{c}{SEASONED} \\\cline{2-16}
    \addlinespace[0.3ex]
    & P & R & F1 & P & R & F1 & P & R & F1 & P & R & F1 & P & R & F1 \\ \hline
    \textbf{GIA} & & & & & & & & & & & & & & & \\
    \;\,50\%-5 & 0.90 & 0.86 & 0.87 & 0.94 & 0.95 & 0.95 & 0.95 & 0.89 & 0.90 & \textbf{1.0} & \underline{0.98} & \underline{0.99} & \textbf{1.0} & \textbf{0.99} &\textbf{0.99} \\
    \;\,50\%-10 & 0.88 & 0.83 & 0.85 & 0.90 & 0.92 & 0.91 & 0.92 & 0.87 & 0.88 & \textbf{1.0} & \underline{0.91} & \underline{0.95} & \textbf{1.0} & \textbf{0.99} & \textbf{0.99} \\
    \;\,100\%-5 & 0.88 & 0.85 & 0.86 & 0.91 & 0.90 & 0.90 & 0.90 & 0.86 & 0.87 & \underline{0.99} & \underline{0.94} & \underline{0.96} & \textbf{1.0} & \textbf{0.99} & \textbf{0.99} \\
    \;\,100\%-10 & 0.82 & 0.78 & 0.79 & 0.87 & 0.87 & 0.87 & 0.89 & 0.83 & 0.84 & \textbf{1.0} & \underline{0.87} & \underline{0.93} & \textbf{1.0} & \textbf{0.99} & \textbf{0.99} \\
    \textbf{LFA} & & & & & & & & & & & & & & & \\
    \;\,30\% & 0.89 & 0.81 & 0.85 & 0.95 & 0.89 & 0.92 & 0.92 & 0.85 & 0.88 & \textbf{1.0} & \underline{0.93} & \underline{0.96} & \textbf{1.0} & \textbf{0.94} & \textbf{0.97} \\
    \;\,50\% & 0.86 & 0.50 & 0.63 & 0.93 & 0.53 & 0.68 & 0.89 & 0.48 & 0.62 & \textbf{1.0} & \underline{0.67} & \underline{0.80} & \textbf{1.0} & \textbf{0.76} & \textbf{0.86} \\
    \;\,70\% & 0.86 & 0.14 & 0.24 & 0.92 & 0.18 & 0.30 & 0.84 & 0.12 & 0.21 & \textbf{1.0} & \underline{0.20} & \underline{0.33} & \textbf{1.0} & \textbf{0.41} & \textbf{0.58} \\
    \textbf{PRBCD} & & & & & & & & & & & & & & & \\
    \;\,1\% & 0.74 & 0.43 & 0.54 & 0.86 & 0.56 & 0.68 & 0.81 & 0.49 & 0.61 & \textbf{1.0} & \underline{0.57} & \underline{0.73} & \textbf{1.0} & \textbf{0.89} & \textbf{0.94} \\
    \;\,3\% & 0.60 & 0.24 & 0.34 & 0.71 & 0.37 & 0.49 & 0.74 & \underline{0.40} & \underline{0.52} & \textbf{1.0} & 0.31 & 0.48 & \textbf{1.0} & \textbf{0.54} & \textbf{0.70} \\
    \;\,5\% & 0.55 & 0.17 & 0.26 & 0.62 & 0.25 & 0.36 & 0.59 & 0.24 & 0.34 & \textbf{1.0} & \underline{0.26} & \underline{0.41} & \textbf{1.0} & \textbf{0.51} & \textbf{0.68} \\
\bottomrule
\end{tabular}
\caption{Robustness comparison.}
\label{tab:robust}
\end{table*}

\begin{table*}[ht]
\centering
\small
    \begin{tabular}{l|ccc|ccc|ccc|ccc|ccc}
    \toprule
    \multirow{2}{*}{\shortstack{Training\\Set}} & \multicolumn{3}{c|}{GCN+SRG} & \multicolumn{3}{c|}{GAT+SRG} & \multicolumn{3}{c|}{GIN+SRG} & \multicolumn{3}{c|}{RGCN+SRG} & \multicolumn{3}{c}{SEASONED} \\\cline{2-16}
    \addlinespace[0.3ex]
    & P & R & F1 & P & R & F1 & P & R & F1 & P & R & F1 & P & R & F1 \\ \hline
    \addlinespace[0.3ex]
    Rand-10\% & 0.86 & 0.84 & 0.85 & 0.94 & 0.89 & 0.91 & 0.92 & 0.90 & 0.91 & \underline{0.96} & \textbf{1.0} & \underline{0.98} & \textbf{0.98} & \underline{0.99} & \textbf{0.99} \\
    Rand-20\% & 0.90 & 0.88 & 0.89 & \underline{0.95} & 0.89 & 0.92 & \underline{0.95} & \underline{0.95} & \underline{0.95} & \textbf{1.0} & \textbf{0.99} & \textbf{0.99} & \textbf{1.0} & \textbf{0.99} & \textbf{0.99} \\ 
    \addlinespace[0.3ex]
    \hline
    \addlinespace[0.3ex]
    Old-10\% & 0.66 & 0.52 & 0.58 & 0.51 & 0.76 & 0.61 & \underline{0.83} & 0.65 & \underline{0.73} & 0.48 & \textbf{1.0} & 0.64 & \textbf{0.88} & \underline{0.99} & \textbf{0.93} \\
    Old-20\% & 0.89 & 0.87 & 0.88 & 0.94 & 0.87 & 0.90 & 0.93 & 0.94 & 0.93 & \underline{0.95} & \textbf{1.0} & \underline{0.97} & \textbf{0.97} & \underline{0.99} & \textbf{0.98} \\
    \bottomrule
    \end{tabular}
\caption{Generalizability comparison.}
\label{tab:general}
\end{table*}

\begin{table*}[ht]
\centering
\small
    \begin{tabular}{c|ccc|ccc|ccc|ccc|ccc}
    \toprule
    \multirow{2}{*}{Samples} & \multicolumn{3}{c|}{GCN+SRG} & \multicolumn{3}{c|}{GAT+SRG} & \multicolumn{3}{c|}{GIN+SRG} & \multicolumn{3}{c|}{RGCN+SRG} & \multicolumn{3}{c}{SEASONED} \\\cline{2-16}
    \addlinespace[0.3ex]
    & P & R & F1 & P & R & F1 & P & R & F1 & P & R & F1 & P & R & F1 \\ \hline
    \addlinespace[0.3ex]
    20 & 0.78 & 0.21 & 0.33 & 0.82 & 0.27 & 0.41 & 0.80 & 0.28 & 0.41 & \underline{0.92} & \underline{0.34} & \underline{0.50} & \textbf{0.94} & \textbf{0.43} & \textbf{0.59} \\
    30 & 0.80 & 0.17 & 0.28 & 0.82 & 0.27 & 0.41 & 0.84 & 0.30 & 0.44 & \underline{0.94} & \underline{0.41} & \underline{0.57} & \textbf{0.98} & \textbf{0.52} & \textbf{0.68} \\
    40 & 0.82 & 0.25 & 0.38 & 0.84 & 0.29 & 0.43 & 0.84 & 0.32 & 0.46 & \underline{0.95} & \underline{0.45} & \underline{0.61} & \textbf{0.99} & \textbf{0.91} & \textbf{0.95} \\
    50 & 0.82 & 0.27 & 0.41 & 0.88 & 0.31 & 0.46 & 0.89 & 0.33 & 0.48 & \underline{0.95} & \underline{0.48} & \underline{0.64} & \textbf{1.0} & \textbf{0.96} & \textbf{0.98} \\
    \bottomrule
    \end{tabular}
\caption{Data-efficient learning comparison.}
\label{tab:dataeff}
\end{table*}

\subsection{Robustness Evaluation (RQ2)}
\paragraph{Attack Settings} Although attackers cannot directly access the SRG, they can indirectly modify the SRG by altering the source code. Since we do not know the exact content of the source code, the resulting changes to the SRG after such modifications are unknown and uncontrollable. Therefore, we employ three graph-domain adversarial attack methods to modify the SRG, in order to evaluate the robustness of our model when facing evading techniques that might be adopted by attackers.

\textbf{Graph Injection Attack (GIA)}: This simulates real-world scenarios where attackers insert benign or irrelevant code segments into AECs to dilute the proportion of attacker code segments, effectively injecting nodes and edges into the contract’s corresponding SRG. Since there are no existing GIAs against AECs detection, we randomly insert $K\%$ of the original SRG nodes ($K=\{50,100\}$), with each inserted node randomly connected to $M$ existing nodes ($M=\{3,5,10\}$).

\textbf{Label Flipping Attack (LFA)}: This represents cases where manual contract labeling errors occur, leading to flipped labels. We follow the settings of \cite{lfa1,lfa2}, randomly flipping the labels of $K\%$ samples in the training set ($K=\{30,50,70\}$).

\textbf{Structural Attack}: Considering our large-scale dataset (containing thousands of SRGs with millions of nodes in total), we employ the efficient evasion-based structural attack methods: Projected Randomized Block Coordinate Descent (\textbf{PRBCD}), proposed in \cite{grbcd}, to generate modified graphs. We strictly adhere to the configurations in \cite{grbcd}, modifying each SRG by inserting or removing $K\%$ edges ($K=\{1,3,5\}$). 

\paragraph{Robustness Comparison} Table~\ref{tab:robust} showcases SEASONED’s superior robustness across all attack scenarios. Firstly, SEASONED exhibits complete immunity to random GIA (Recall = 0.99), whereas RGCN experiences a significant performance drop (Recall = 0.87 under 100\% node and 10-edge insertion). This observation confirms that while random code segment injections can effectively evade RGCN detection, SEASONED maintains perfect accuracy regardless of the intensity of the attack. Secondly, SEASONED exhibits significantly stronger robustness against LFA than GNN-based detectors, achieving 0.41 recall at 70\% label noise, which doubles the second-best performance (RGCN's). This robustness advantage is similarly evident in PRBCD attacks, where SEASONED consistently outperforms all baseline methods.

Notably, RGCN and SEASONED share the same graph encoder. Therefore, the divergence in robustness between them demonstrates that SEASONED enhances robustness by extracting representative information from explanations.

\subsection{Generalizability Evaluation (RQ2)}
Table.~\ref{tab:general} presents model performance comparisons under two training set sampling strategies to evaluate generalizability. The Rand-$K$\% strategy employs random sampling of $K$\% contracts from the dataset, while the Old-$K$\% approach selects the oldest $K$\% contracts based on deployment timestamps ($K=\{10,20\}$). The remaining contracts constitute the validation and test sets. This experimental design specifically tests whether detectors trained on historical contracts maintain effectiveness when applied to newly deployed contracts.

Table.~\ref{tab:general} shows that GNN-based detectors perform significantly worse on Old-10\% than on Rand-10\%, despite both containing the same number of samples. This performance gap reveals their poor generalizability to new contracts. In contrast, SEASONED achieves much better results (F1=0.93 on Old-10\%), substantially outperforming the second-best detector (F1=0.73). These results demonstrate SEASONED's superior ability to generalize to newly deployed contracts, making it more suitable for real-world Ethereum applications. 

We attribute SEASONED's generalizability to its unique mechanism of extracting representative information from explanations that remain invariant to Ethereum's temporal evolution. This approach enables consistent detection performance across contracts in different periods. 

\subsection{Data-efficient Learning Evaluation (RQ2)}
Considering the scarcity of well-labeled contracts in real-world scenarios, the data-efficient learning capability becomes crucial for detectors. We evaluate this by constructing a training set through random sampling of $K$ contracts ($K=\{20,30,40,50\}$), with results presented in Table~\ref{tab:dataeff}. As shown in Table~\ref{tab:dataeff}, SEASONED achieves comparable performance to full-dataset training when using limited data, attaining F1 scores of 0.95 (40 samples) and 0.98 (50 samples), while other detectors' F1 scores are below 0.7. This proves the data-efficient learning ability of SEASONED that addresses the critical challenge of label scarcity in smart contract analysis.

\subsection{Ablation Studies (RQ3)}
\paragraph{Effectiveness of SRG}
To evaluate the effectiveness of SRG, we performed an ablation study by deconstructing this graph representation, which consists of three relation types, into different combinations.

\begin{figure}[t]
    \centering
    \includegraphics[width=0.9\linewidth]{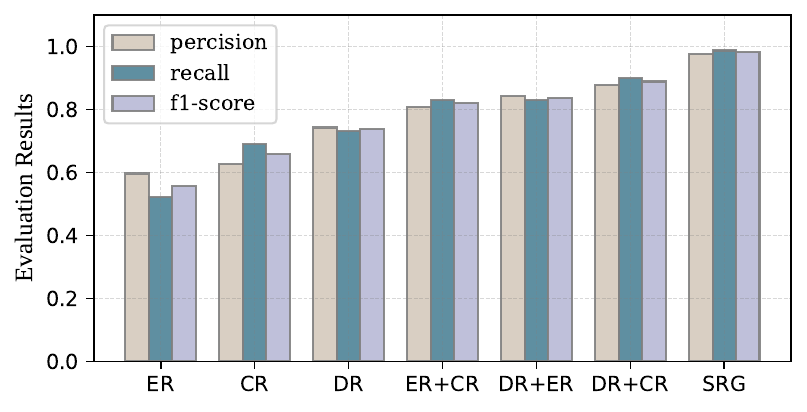}
    \caption{Ablation study to evaluate the effectiveness of SRG.}
    \label{fig:ablation}
\end{figure}

Specifically, we examined seven combinations: data relation only (DR), effect relation only (ER), control relation only (CR), data and effect relation (DR+ER), data and control relation (DR+CR), effect and control relation (ER+CR), and the complete SRG encompassing all three relations. Evaluation results are presented in \ref{fig:ablation}. Notably, the evaluation metrics for individual relationship types drop significantly compared to the original SRG, even for the Data Flow Relation, which constitutes the largest proportion of edges in the SRG. Graphs retaining two types of edges show improved metrics over single-edge graphs but remain insufficient for fully detecting malicious behavior. For the ER+DR configuration, the absence of control flow edges likely results in missing contextual information, such as function call order. Similarly, the ER+CR graph lacks the dominant data flow edges, leading to insufficient capture of semantic information like data propagation. The DR+CR configuration achieves better performance by capturing both execution logic and data dependencies, yet it still falls short of the comprehensive analysis enabled by SRG. This gap may be attributed to the absence of effect flow, which captures critical state changes like asset transfers. 

\paragraph{Effectiveness of SCFED}
Table.~\ref{tab:explain} and Table.~\ref{tab:ablation} show the comparison of three variants of SCFED: SCFED w/o $\mathcal{L}_{\mathrm{CF}}$, SCFED w/o $\mathcal{L}_{\mathrm{SP}}$ and SCFED w/o $\mathcal{L}_{\mathrm{MI}}$, which remove $\mathcal{L}_{\mathrm{CF}}$, $\mathcal{L}_{\mathrm{SP}}$, and $\mathcal{L}_{\mathrm{MI}}$, respectively. 

\begin{table}[!ht]
\centering
\resizebox{\linewidth}{!}{
    \begin{tabular}{l|c|cccc}
    \toprule
    \multirow{2}{*}{Variants} & \multirow{2}{*}{Metric} & \multirow{2}{*}{Clean} & GIA & LFA & PRBCD \\ 
    & & & 100\%-10 & 70\% & 5\% \\
    \midrule \addlinespace[0.3ex]
    
    \multirow{3}{*}{SCFED w/o $\mathcal{L}_{\mathrm{CF}}$} & \textbf{P} & \textbf{1.0} & \textbf{1.0} & \textbf{1.0} & \underline{0.95} \\
    & \textbf{R} & 0.99 & 0.90 & 0.34 & 0.30 \\
    & \textbf{F1} & 0.99 & 0.95 & 0.51 & 0.46 \\
    \midrule \addlinespace[0.3ex]
    
    \multirow{3}{*}{SCFED w/o $\mathcal{L}_{\mathrm{SP}}$} & \textbf{P} & \textbf{1.0} & \textbf{1.0} & \textbf{1.0} & \textbf{1.0} \\
    & \textbf{R} & 0.99 & \underline{0.99} & \underline{0.37} & \underline{0.50} \\
    & \textbf{F1} & 0.99 & \underline{0.99} & \underline{0.54} & \underline{0.67} \\
    \midrule \addlinespace[0.3ex]
    
    \multirow{3}{*}{SCFED w/o $\mathcal{L}_{\mathrm{MI}}$} & \textbf{P} & \textbf{1.0} & \textbf{1.0} & 0.93 & 0.95 \\
    & \textbf{R} & 0.99 & 0.88 & 0.23 & 0.20 \\
    & \textbf{F1} & 0.99 & 0.94 & 0.37 & 0.33 \\
    \midrule \addlinespace[0.3ex]
    
    \multirow{3}{*}{SCFED} & \textbf{P} & \textbf{1.0} & \textbf{1.0} & \textbf{1.0} & \textbf{1.0} \\
    & \textbf{R} & \textbf{1.0} & \textbf{0.99} & \textbf{0.41} & \textbf{0.51} \\
    & \textbf{F1} & \textbf{1.0} & \textbf{0.99} & \textbf{0.58} & \textbf{0.68} \\
    \bottomrule
    \end{tabular}
}
\caption{Comparison of SCFED and its ablated variants.}
\label{tab:ablation}
\end{table}

Table~\ref{tab:ablation} shows that $\mathcal{L}_{\mathrm{MI}}$ is crucial for the robustness of SCFED. SCFED w/o $\mathcal{L}_{\mathrm{MI}}$ achieves the worst performance under various attacks. This is because $\mathcal{L}_{\mathrm{MI}}$ encourages SCFED to extract representative information and filter adversarial noise and decoy patterns, thus improving robustness.

$\mathcal{L}_{\mathrm{CF}}$ and $\mathcal{L}_{\mathrm{SP}}$ have a greater influence on the explanatory effectiveness of SCFED. The PN value of SCFED w/o $\mathcal{L}_{\mathrm{CF}}$ is only 0.15, since without $\mathcal{L}_{\mathrm{CF}}$, the prediction value of the residual graph $R$ cannot be guaranteed to be zero, indicating that the explanations provided by SCFED w/o $\mathcal{L}_{\mathrm{CF}}$ are not accurate. As for SCFED w/o $\mathcal{L}_{\mathrm{SP}}$, despite achieving relatively higher PS and PN values, its Avg.Size exceeds one thousand, suggesting that its explanations contain too much redundancy.

These results indicate that $\mathcal{L}_{\mathrm{MI}}$ ensures SCFED can extract representative information that is crucial for robustness, while $\mathcal{L}_{\mathrm{CF}}$ and $\mathcal{L}_{\mathrm{SP}}$ ensure that SCFED can provide accurate and concise explanations.

\begin{figure}[t]
    \centering
    \includegraphics[width=0.9\linewidth]{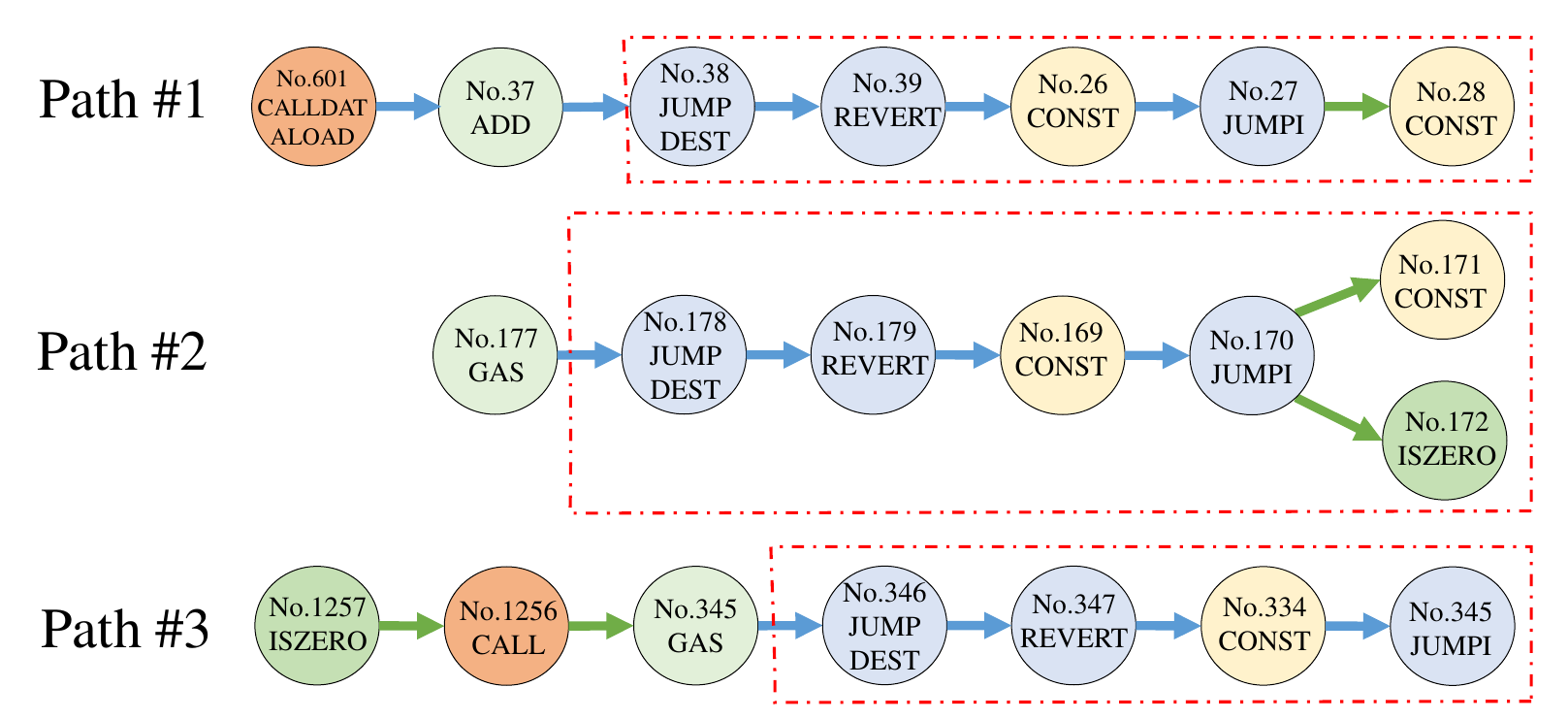}
    \caption{Examples of malicious execution paths in the factual subgraph of Harvest. Each node is marked with an index and corresponding opcode}
    \label{fig:case}
\end{figure}

\subsection{Case Study}
Since the AECs leveraged in real-world attack incidents are typically closed-source, we validated the effectiveness of our explanation framework by referencing a proof-of-concept AEC for Harvest Finance Hack~\footnote{https://github.com/abdulsamijay/Defi-Hack-Analysis-POC/blob/master/src/harvest-finance/HarvestExploit.sol}, which is developed by white-hat researchers.
Specifically, we manually label the code segments responsible for attacks and construct its corresponding SRG from the compiled bytecode. The resulting SRG contains 1,602 nodes, of which 1,016 correspond to attack code segments. SEASONED successfully identifies the entire SRG as AEC, and its generated factual subgraph, consisting of 1,272 nodes, fully covers all 1,016 nodes of attack segments. 

Fig.~\ref{fig:case} shows three example execution paths that are repeated in the factual subgraph. These paths exhibit similar subpaths (highlighted in red) and include numerous JUMP nodes and control relations (blue edges), which are not present in the counterfactual subgraph. This observation aligns with the behavioral patterns observed in the actual attack transaction \href{https://app.blocksec.com/explorer/tx/eth/0x9d093325272701d63fdafb0af2d89c7e23eaf18be1a51c580d9bce89987a2dc1}{0x9d09}, as well as with the core attack logic implemented in the proof-of-concept AEC, where repeated flashloans and swaps trigger multiple external contract and function calls.

\subsection{Visualization}
Since most AECs have thousands of nodes and are difficult to visualize, we select the two smallest AECs, 0xd09b (169 nodes) and 0x8aa5 (472 nodes), for visualization. The explanation results for these two contracts are shown in Fig.~\ref{fig:visl}(a) and Fig.~\ref{fig:visl}(d), where the factual subgraph is highlighted in orange. It can be observed that the factual subgraphs of both contracts contain long chain-like patterns, which correspond to long execution paths in the attacker contract code. This indicates that SEASONED considers long execution paths to be an important factor for detecting contracts as malicious.

We use GIA to insert nodes and edges into the SRG of both contracts, with node insertion ratios of 50\% (Fig.~\ref{fig:visl}(b) and Fig.~\ref{fig:visl}(e)) and 100\% (Fig.~\ref{fig:visl}(c) and Fig.~\ref{fig:visl}(f)). For clarity in visualization, we insert only one edge for each inserted node. If an inserted node is included in the factual subgraphs, it is marked in red; otherwise, it is marked in green. Fig.~\ref{fig:visl}(b)(c)(e) show only a few red nodes, and in Fig.~\ref{fig:visl}(f), there are 113 red nodes, accounting for only 24\% of all inserted nodes (472). This demonstrates that SEASONED can filter out interference and provide accurate explanation results.

\begin{figure}[!t]
\vspace{-0.4cm}
\centering
\begin{subfigure}{0.15\textwidth}
    \includegraphics[width=1.1\textwidth]{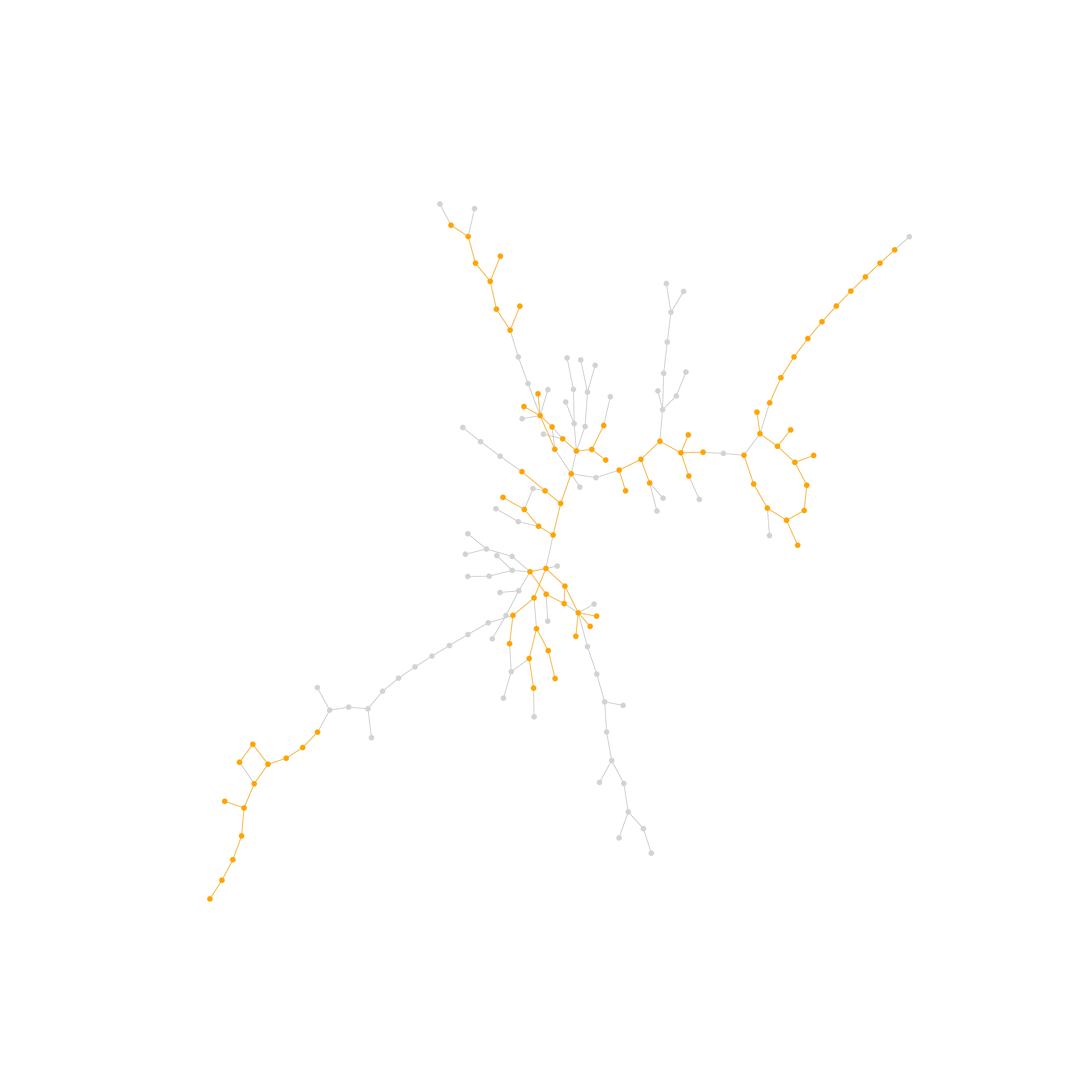}
    \caption{0xd09b Clean}
\end{subfigure}
\hfill
\begin{subfigure}{0.15\textwidth}
    \includegraphics[width=1.1\textwidth]{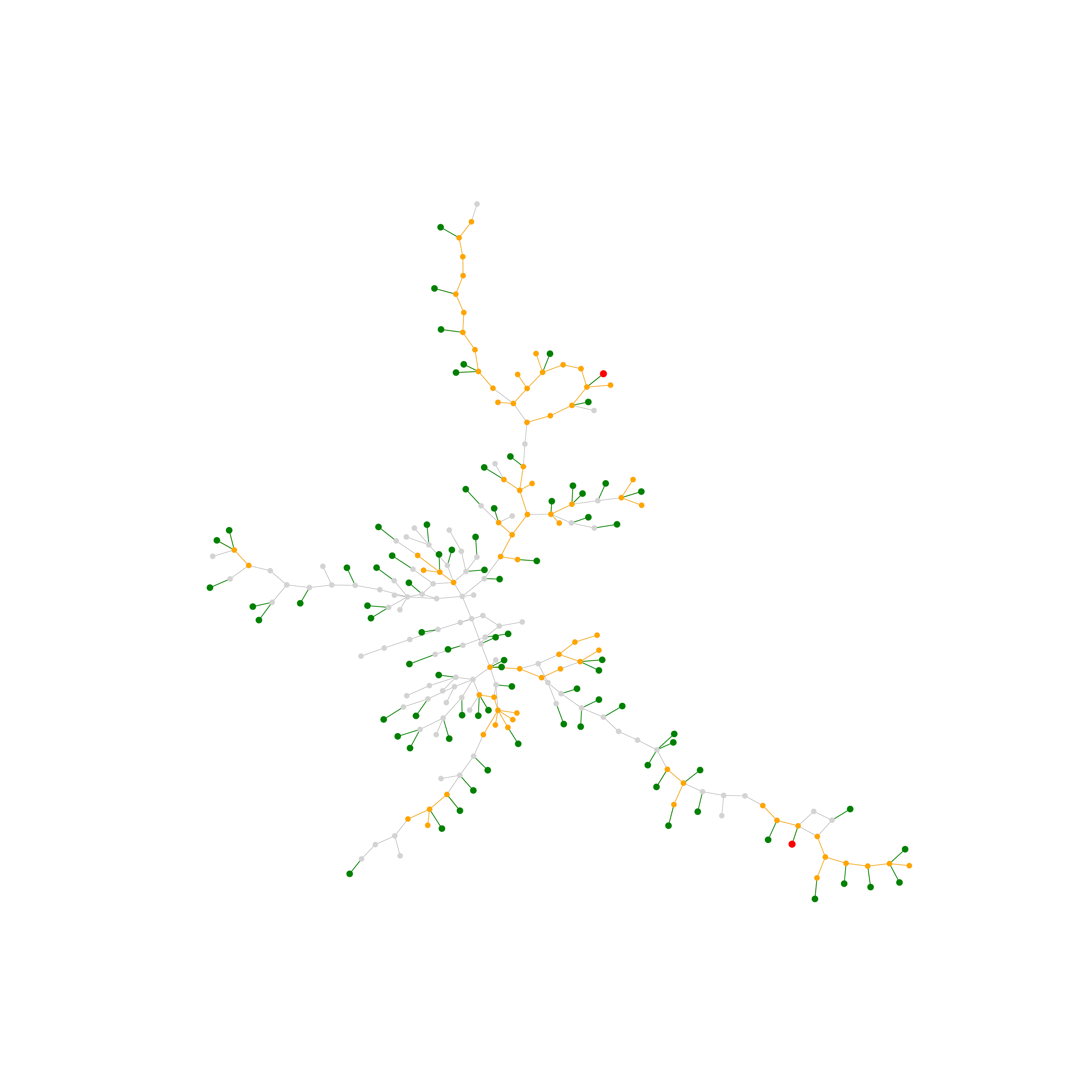}
    \caption{0xd09b 50\%-1}
\end{subfigure}
\hfill
\begin{subfigure}{0.15\textwidth}
    \includegraphics[width=1.1\textwidth]{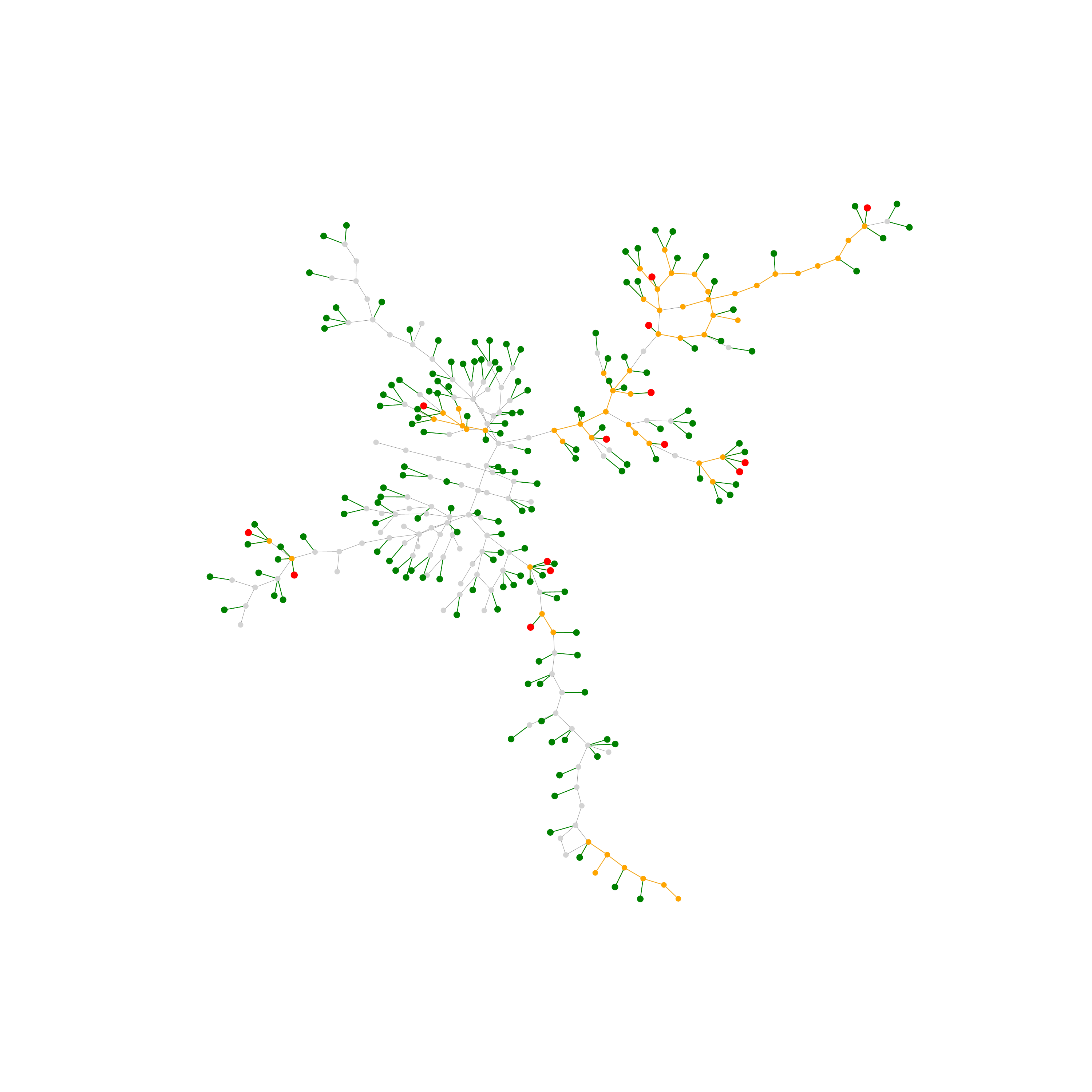}
    \caption{0xd09b 100\%-1}
\end{subfigure}
\\
\begin{subfigure}{0.15\textwidth}
    \includegraphics[width=1.1\textwidth]{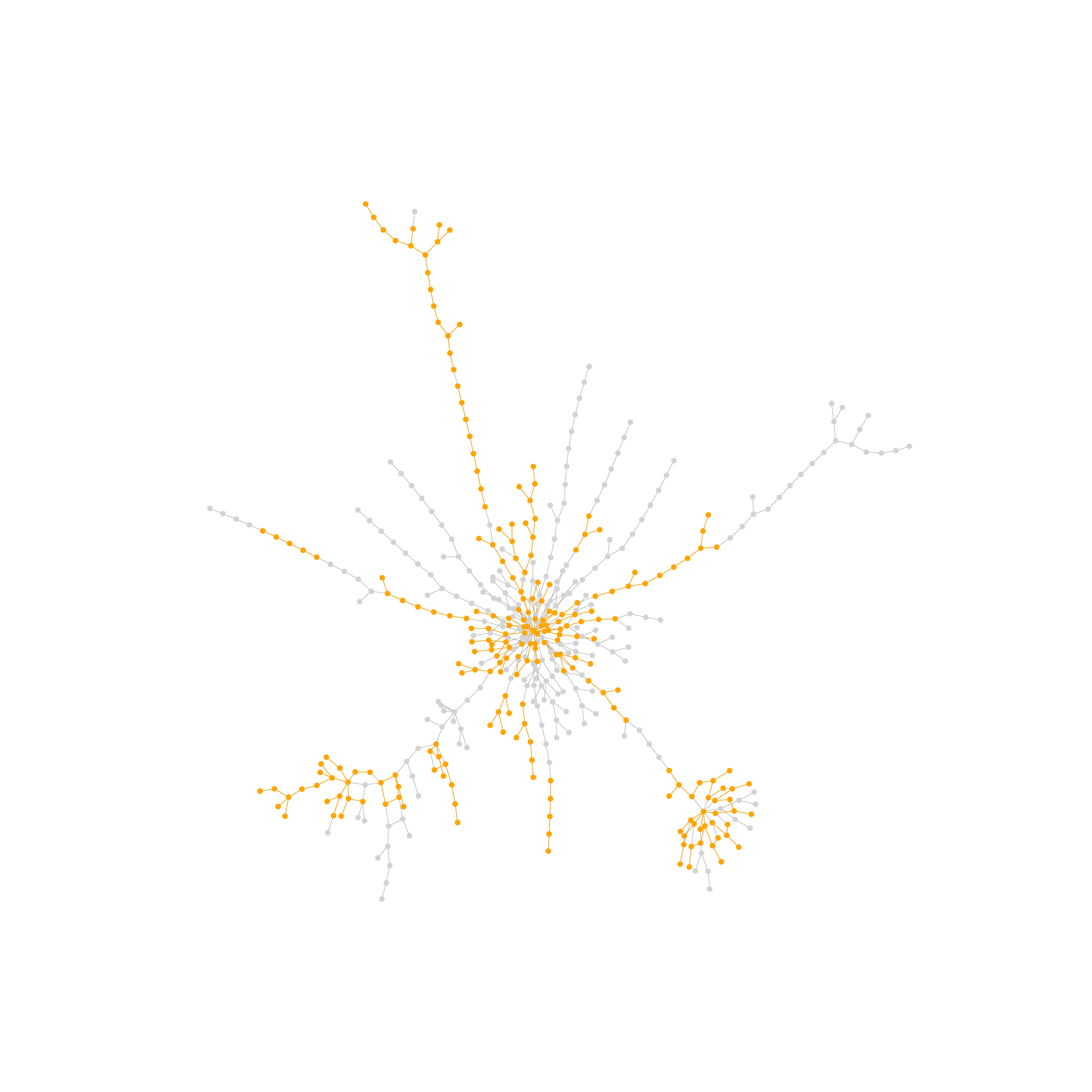}
    \caption{0x8aa5 Clean}
\end{subfigure}
\hfill
\begin{subfigure}{0.15\textwidth}
    \includegraphics[width=1.1\textwidth]{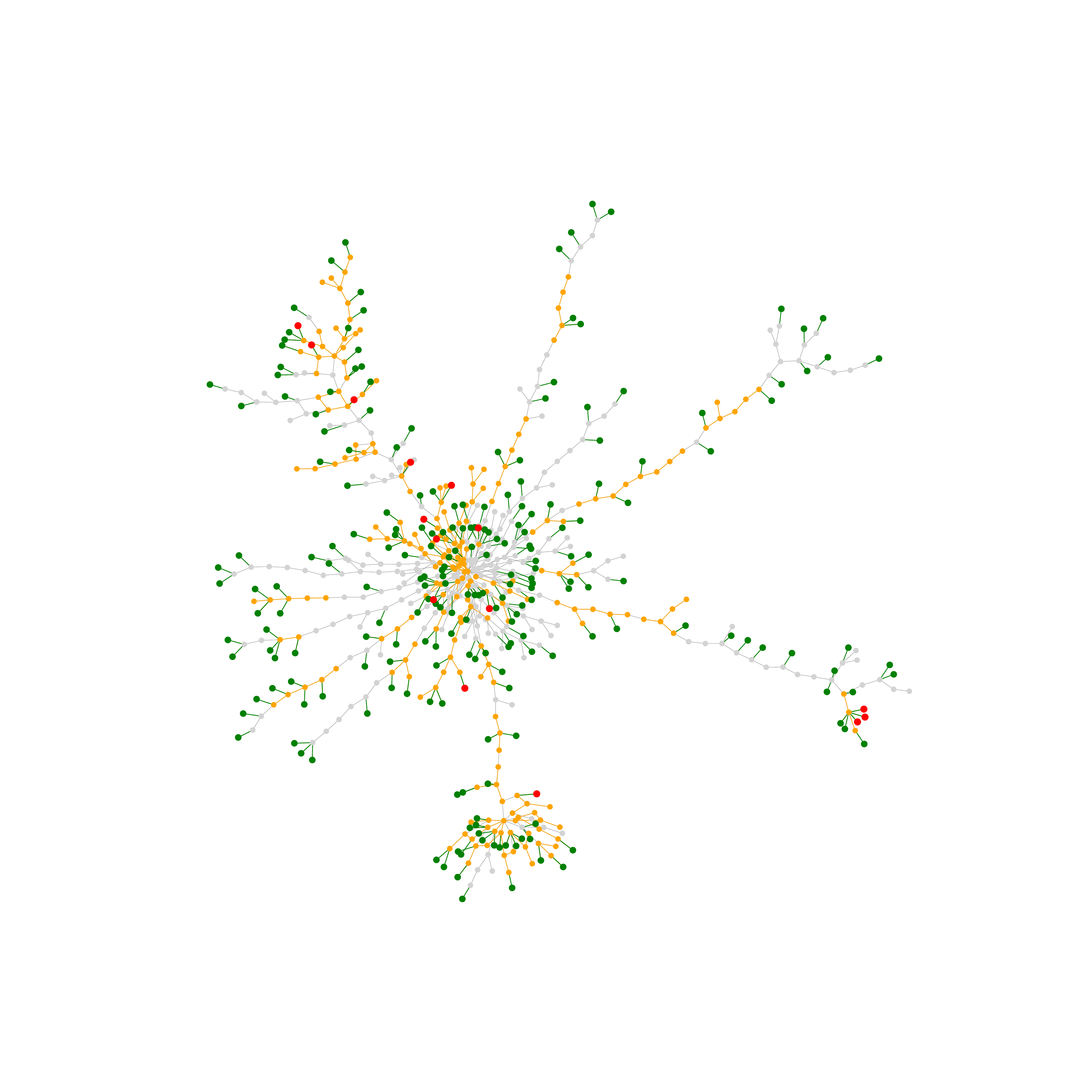}
    \caption{0x8aa5 50\%-1}
\end{subfigure}
\hfill
\begin{subfigure}{0.15\textwidth}
    \includegraphics[width=1.1\textwidth]{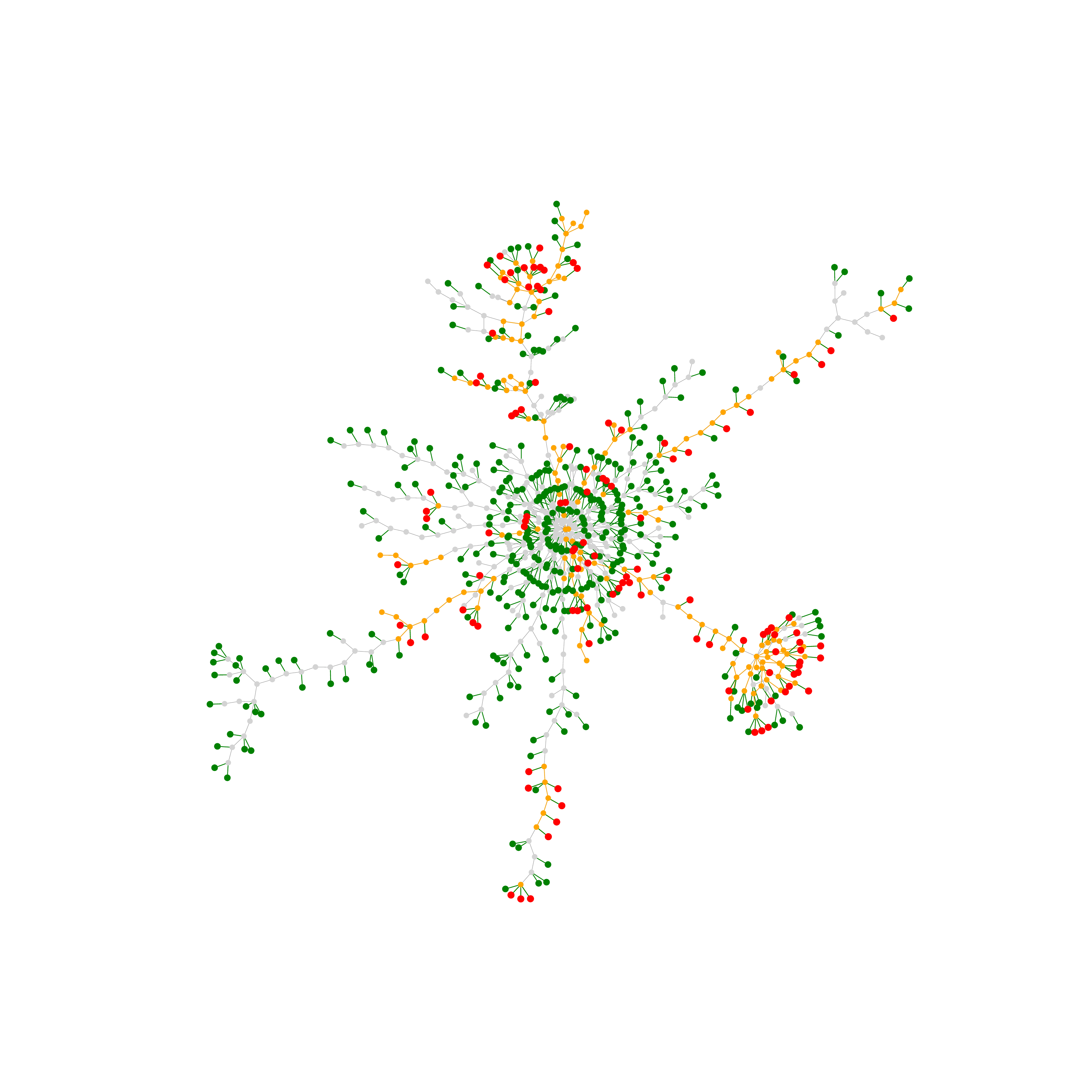}
    \caption{0x8aa5 100\%-1}
\end{subfigure}
\caption{Visualizations of two AECs' explanations.}
\label{fig:visl}
\end{figure}

\begin{figure*}[t]
    \centering
    \begin{subfigure}{0.32\textwidth}
        \centering
        \includegraphics[width=5.5cm, height=3.3cm]{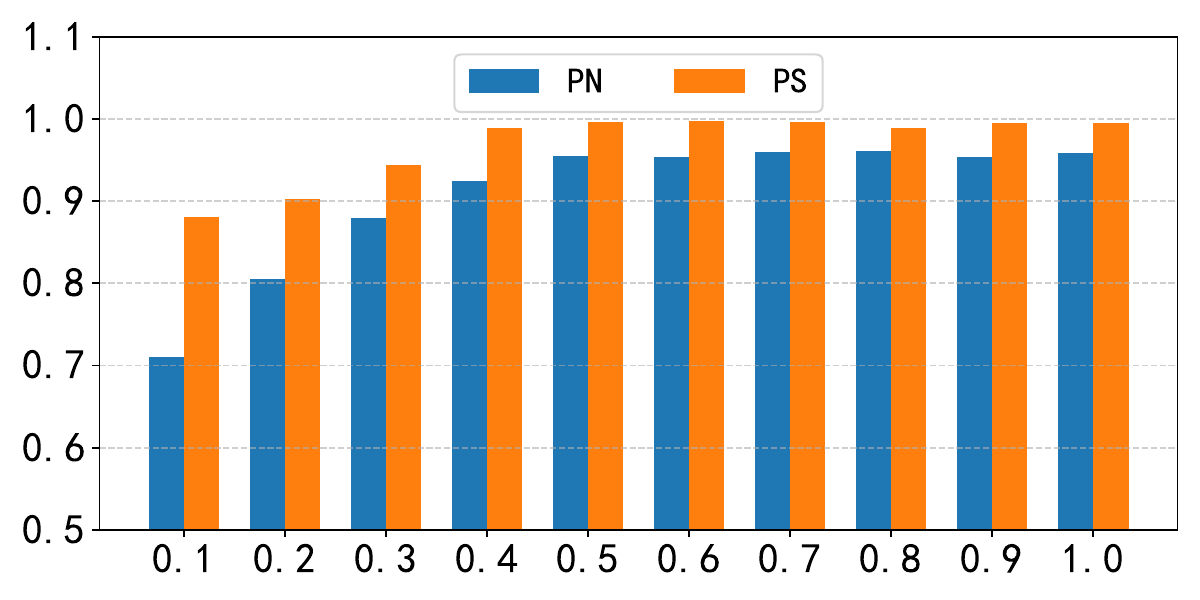}
        \caption{}
        \label{fig:param_alpha}
    \end{subfigure}
    \begin{subfigure}{0.32\textwidth}
        \centering
        \includegraphics[width=5.5cm, height=3.3cm]{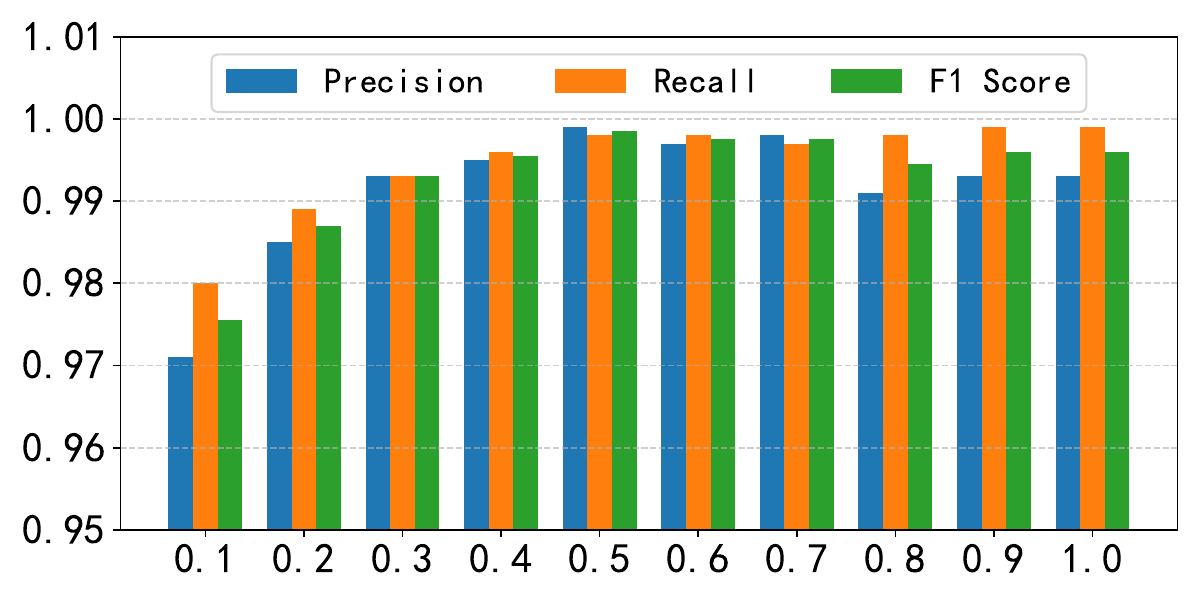}
        \caption{}
        \label{fig:param_beta}
    \end{subfigure}
    \begin{subfigure}{0.32\textwidth}
        \centering
        \includegraphics[width=5.5cm, height=3.2cm]{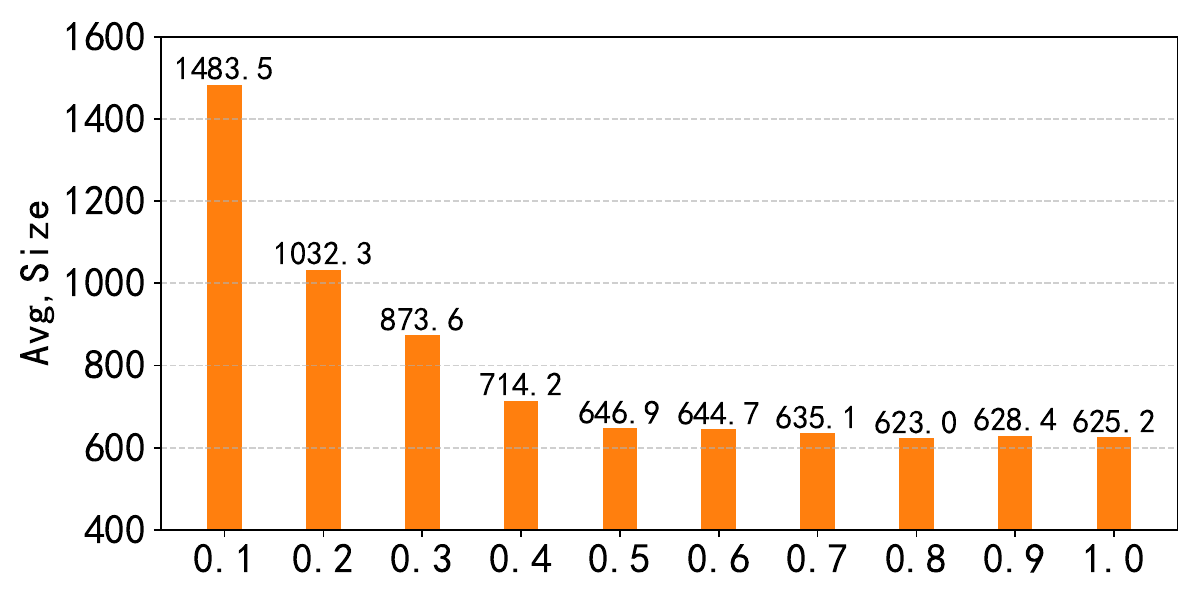}
        \caption{}
        \label{fig:param_gamma}
    \end{subfigure}
    \caption{Parameter sensitive evaluation of: (a) $\alpha$ (b) $\beta$ (c) $\gamma$.}
    \label{fig:param_sens}
\end{figure*}

\subsection{Hyperparameter Sensitive}
We investigate the sensitivity of three key hyperparameters: $\alpha$, $\beta$, and $\gamma$, as shown in Fig.~\ref{fig:param_sens}. Specifically, since $\alpha$ is the weight of $\mathcal{L}_{\mathrm{CF}}$ and directly affects explanation generation, Fig.~\ref{fig:param_alpha} presents the \textbf{PS} and \textbf{PN} values for different $\alpha$ (0.1--1.0). We observe that \textbf{PS} and \textbf{PN} stabilize when $\alpha \geq 0.4$, while smaller $\alpha$ leads to lower values due to insufficient $\mathcal{L}_{\mathrm{CF}}$ weight for accurate explanations. Fig.~\ref{fig:param_beta} shows that $\beta$ (the weight of $\mathcal{L}_{\mathrm{MI}}$) achieves the highest F1 between 0.5 and 0.7, indicating optimal performance. Excessive or insufficient $\beta$ results in lower F1, suggesting that both over- and under-emphasizing $\mathcal{L}_{\mathrm{MI}}$ can hinder detection performance. For $\gamma$ (the weight of $\mathcal{L}_{\mathrm{SP}}$), Fig.~\ref{fig:param_gamma} shows that when $\gamma \geq 0.5$, the average size of the factual subgraph remains below 650, while too small $\gamma$ (e.g., 0.1) leads to overly large and imprecise subgraphs.

\subsection{Runing Time Comparison}

\begin{table}[t]
    \centering
    \small
    \begin{tabular}{ l | c c }
        \toprule
        Methods & \textbf{Training (s)} & \textbf{Test (s)} \\
        \midrule
        GCN & 557.5 & 0.78 \\
        GAT & 641.2 & 0.80 \\
        GIN & 570.6 & 0.74 \\
        RGCN & 714.8 & 0.81 \\
        \midrule
        GNNExlainer & 1,115.1 & 1.33 \\
        PGExlainer & 1,158.0 & 1.12 \\
        CF$^2$ & 1,248.9 & 1.58 \\
        NSEG & 1,223.2 & 1.46 \\
        \midrule
        SEASONED & 1,122.1 & 1.40 \\
        \bottomrule
    \end{tabular}
    \caption{Running time comparison.}
    \label{tab:runing_time}
\end{table}

As shown in Table~\ref{tab:runing_time}, although SEASONED requires more training time compared to GNN-based detectors, it still completes training within 20 minutes, which is within an acceptable range. Compared to post-hoc explainers, SEASONED’s training and inference times are comparable to or slightly better than theirs.

\section{Related Work}
\subsection{Safeguarding DeFi Attacks}
Significant efforts have been dedicated to protecting smart contracts, with research focusing on both the victim and attacker sides.
From the victim side, the primary focus is detecting vulnerabilities. 
A wide range of analysis techniques have been explored, with notable tools including oyente \cite{luu2016making}, 
\textsc{Securify} \cite{tsankov2018securify}, 
Clairvoyance \cite{xue2020cross}, 
Echidna \cite{grieco2020echidna}, 
and \textsc{Sailfish} \cite{bose2022sailfish}.
From the attacker perspective, research efforts focus on detecting transactions and contracts used by attackers.
Among the transaction-specific detection efforts are SODA \cite{chen2020soda},
\textsc{TxSpector} \cite{zhang2020txspector},
DEFIER \cite{su2021evil},
HORUS \cite{ferreira2021eye},
and EthScope \cite{wu2022time}.
Among the contract-specific investigation are Forta \cite{forta2023}, LookAhead \cite{ren2024lookahead}, and Skyeye \cite{wang2024skyeye}.
Researchers have also explored methods like graph neural network \cite{zhuang2020smart,he2024code}, deep reinforcement learning \cite{zhang2022reentrancy}, and transfer learning \cite{sendner2023smarter} to detect various vulnerabilities.
Recent studies have employed generative AI tools for vulnerability detection \cite{sun2023gpt,wang2024smartinv}.


\subsection{Graph Learning and Explanation}
As a powerful paradigm for graph learning, GNNs achieve significant success in various domains, such as social network analysis \cite{hu2023cost}, molecular property prediction \cite{NEURIPS2023_4db8a681}, and recommendation systems \cite{zhang2024hi}. Despite their success, GNNs are often criticized for their lack of interpretability, which hinders their adoption in sensitive domains. To address this, post-hoc explainers have been proposed to generate subgraph-based explanations. For instance, GNNExplainer \cite{gnnexplainer}, PGExplainer \cite{pgexplainer}, and RG-Explainer \cite{rgexplainer} generate subgraph-based explanations by perturbing input data, learning parameterized mappings, or using reinforcement learning, respectively. NSEG \cite{nseg} further seeks necessary and sufficient explanations via optimization. However, these methods are not jointly trained with the detection model and may produce biased or inaccurate explanations. To address these issues, self-explanatory GNNs have been developed. SE-GNN \cite{segnn} and ProtGNN \cite{protgnn} provide built-in explanations by modeling node similarity and prototype matching, with ProtGNN utilizing Monte Carlo tree search for subgraph exploration. SUNNY-GNN \cite{sunnygnn} improves the quality of the explanation through structural perturbations and contrastive learning. Nevertheless, these self-explanatory methods cannot use explanations to enhance detection and may yield suboptimal predictions.

\section{Conclusion}
In this paper, we present SEASONED, an effective, self-explanatory, and robust framework for detecting adversarial exploiter contracts. SEASONED leverages a novel semantic relation graph to capture semantical information and employs a self-counterfactual explainable detector for both detection and core logic explanation. Furthermore, SEASONED enhances detection robustness, generalizability, and data efficiency by extracting the representative information from explanations. Experiments show that SEASONED outperforms state-of-the-art methods in detection performance, adversarial robustness, generalization to new appeared contracts, and data efficiency. As an innovative integration of explainable graph learning and AEC detection, SEASONED advances DeFi security.




%

\bibliographystyle{IEEEtran}
\bibliography{main_ndss}
\end{document}